\newtheorem{theorem}{Theorem}
\newtheorem{lemma}[theorem]{Lemma}
\begin{document}
\title{Downlink Precoding for Massive MIMO Systems Exploiting Virtual Channel Model Sparsity}
\author{Thomas~Ketseoglou,~\IEEEmembership{Senior~Member,~IEEE,} and Ender~Ayanoglu,~\IEEEmembership{Fellow,~IEEE}

\thanks{T. Ketseoglou is with the Electrical and Computer Engineering Department, California State Polytechnic University, Pomona, California (e-mail: tketseoglou@csupomona.edu). E. Ayanoglu is
with the Center for Pervasive Communications and
Computing, Department of Electrical Engineering and Computer Science,
University of California, Irvine (e-mail: ayanoglu@uci.edu). This work was partially supported by NSF grant 1547155.}
}


\maketitle
\begin{abstract}
In this paper, the problem of designing a forward link linear precoder for Massive Multiple-Input Multiple-Output (MIMO) systems in conjunction with Quadrature Amplitude Modulation (QAM) is addressed. First, we employ a novel and efficient methodology that allows for a sparse representation of multiple users and groups in a fashion similar to Joint Spatial Division and Multiplexing. Then, the method is generalized to include Orthogonal Frequency Division Multiplexing (OFDM) for frequency selective channels, resulting in Combined Frequency and Spatial Division and Multiplexing, a configuration that offers high flexibility in Massive MIMO systems.  A challenge in such system design is to consider finite alphabet inputs, especially with larger constellation sizes such as $M\geq 16$. The proposed methodology is next applied jointly with the complexity-reducing Per-Group Processing (PGP) technique, on a per user group basis, in conjunction with QAM modulation and in simulations, for constellation size up to $M=64$.
We show by numerical results that the precoders developed offer significantly better performance than the configuration with no precoder or the plain beamformer and with $M\geq 16$.
\end{abstract}
\IEEEpeerreviewmaketitle
\section{{INTRODUCTION}}
Massive MIMO  employs a very large number of antennas and enables very high spectral efficiency \cite{Marzetta1,Marzetta2,Marzetta3}. For Massive MIMO to be capable of offering its full benefits, accurate and instantaneous channel state information is required at the base station (BS). Within Massive MIMO research, the problem of designing an optimal linear precoder toward maximizing the mutual information between the input and output on the downlink in conjunction with a finite input alphabet modulation and multiple antennas per user has not been considered in the literature, due to its complexity. There are techniques proposed for downlink linear precoding in a multi-user MIMO scenario, e.g., Joint Spatial Division and Multiplexing (JSDM) \cite{JSDM1}, but their implementation has been challenging so far. In addition, there has been a lack of publications on how to realistically integrate OFDM in Massive MIMO with success and without sacrificing the spectral efficiency of the system. On the other hand, the problem of finite-alphabet input MIMO linear precoding has been extensively studied in the literature. Globally optimal linear precoding techniques were presented \cite{Xiao,Lamarca} for scenarios employing channel state information available at the transmitter (CSIT)\footnote{Under CSIT the transmitter has perfect knowledge of the MIMO channel realization at each transmission.} with finite-alphabet inputs, capable of achieving mutual information rates
much higher than the previously presented Mercury Waterfilling (MWF) \cite{Verdu1} techniques by introducing input symbol correlation through a unitary input
transformation matrix in conjunction with channel weight adjustment (power allocation). In addition, more recently, \cite{Max} has presented an iterative algorithm for precoder optimization for sum rate maximization of  Multiple Access Channels (MAC) with Kronecker MIMO channels. Furthermore, more recent work has shown that when only Statistical Channel State Information (SCSI)\footnote{SCSI pertains to the case in which the transmitter has knowledge of only the MIMO channel correlation matrices \cite{Khan,Weich} and the thermal noise variance.} is available at the transmitter, in asymptotic conditions when the number of transmitting and receiving antennas grows large, but with a constant transmitting to receiving antenna number ratio, one can design the optimal precoder by looking at an equivalent constant channel and its corresponding adjustments as per the pertinent theory \cite{SCSI}, and applying a modified expression for the corresponding ergodic mutual information evaluation over all channel realizations. This development allows for a precoder optimization under SCSI in a much easier way \cite{SCSI}. Finally, \cite{Lozano,TK_EA_QAM} present for the first time results for mutual information maximizing linear precoding with large size MIMO configurations and QAM constellations. Such systems are particularly difficult to analyze and design when the inputs are from a finite alphabet, especially with QAM constellation sizes, $M\geq 16$.

In this paper, we present optimal linear precoding techniques for Massive MIMO, suitable for QAM with constellation size $M\geq 16$ and CSIT. Two types of antenna arrays are considered for the Base Station (BS), Uniform Linear Arrays (ULA) and Uniform Planar Arrays (UPA). In the UPA case, we consider arrays deployed either over the $x,~y$ direction or the $z,~x$ one. We show that by projecting the per user antenna uplink channels on the DFT based angular domain, called virtual channel model (VCM) herein, a sparse representation is possible for the channels. Then, by dividing spatially ``distant''  users into separate spatial sectors, we show that the spatial virtual channel representations between these users become approximately orthogonal. We then show that the concept of JSDM \cite{JSDM1} can be easily applied in the sparse virtual channel model domain representation and show that linear precoding on the downlink using Per-Group Precoding (PGP) in conjunction with the Gauss-Hermite approximation in MIMO \cite{TE_TWC,TK_EA_QAM}. However, the issue of group-based decoding is still present at the destination. We employ two different methods toward mitigating this problem. Then, by generalizing the presented approach to the frequency-selective (FS) channel case and applying OFDM, we show that much more flexibility and gains are available by the techniques presented. We show that when OFDM is integrated in the VCM-based JSDM system developed, resulting in Combined Frequency and Spatial Division and Multiplexing (CFSDM), the system can offer much higher rates, overcome issues of spatial overlapping by employing different carriers between spatially overlapping groups, and also easily decode different users' data within a group. In all examples presented, we show high gains achievable by the proposed downlink precoding approach. More specifically, the paper makes the following contributions in Massive MIMO:
\begin{enumerate}
\item It provides an analytical framework that allows spatially separated user groups  to be approximately orthogonal and thus to require independent per-group precoding beams from the base station.
\item It proves that the presented semi-orthogonal decomposition fits the JSDM \cite{JSDM1} model.
\item It shows that the selected pre-beamforming matrices for the JSDM decomposition are optimal.
\item It employs both linear and planar arrays.
\item It generalizes the approach to include OFDM under frequency-selective channel conditions in a very flexible way.
\item It shows very significant gains in conjunction with PGP \cite{TK_EA_QAM} and QAM modulation.
\end{enumerate}

The paper is organized as follows: Section II presents the system model and problem statement. Then, in Section III, we present a novel virtual channel approach which allows for efficient downlink precoding in a JSDM fashion for ULA and UPA for narrowband channels. In Section IV we focus on FS channels, which naturally leads to OFDM type of systems imposed to the presented JSDM approach. In Section IV, we present numerical results for optimal downlink precoding on the system proposed that implements the Gauss-Hermite approximation in the block coordinate gradient ascent method in conjunction with the complexity reducing PGP methodology\cite{TK_EA_QAM}. Finally, our conclusions are presented in Section V.

\section{{SYSTEM MODEL AND PROBLEM STATEMENT}}
Consider the following uplink equation on a narrowband (flat-fading) Massive MIMO system with a single cell
\begin{equation}
{\mathbf y_u} = {\mathbf H_u}  {\mathbf x}_u + {\mathbf n_u},
\end{equation}
where
${\mathbf y_u}$ is the $N_u\times1$ received vector at the base, ${\mathbf H}_u = [{\mathbf H}_1,\cdots, {\mathbf H}_G]$ is the $N_u\times K_{eff}$ channel matrix of the received data from all $K$ users, employing $N_u$ receiving antennas at the base, with $K_{eff}= \sum_{g} N_{d,g}$, where $N_{d,g}$ is defined below, and where users have been divided into $G$ groups with $K_g$ users in group $g$ ($1\leq g \leq G$), with user $k$ of group $g$ denoted as $k^{(g)}$ and employing $N_{d,k^{(g)}}$ transmitting antennas, with ($\sum_{g=1}^G K_g = K$), ${\mathbf H}_g =[{\mathbf H}_{g^{(1)}}\cdots {\mathbf H}_{g^{(K_g)}} ]$ is group $g$'s uplink channel matrix of size $N_u\times N_{d,g}$, with $N_{d,g}$ comprising the total number of antennas in the group, i.e., $N_{d,g}=\sum_{k^{(g)}}N_{d,k^{(g)}}$, where ${\mathbf n}_u$ represents the independent, identically distributed (i.i.d.) complex circularly symmetric Gaussian noise of variance per component $\sigma_u^2 = \frac{1}{\mathrm{SNR}_{s,u}}$, where ${\mathrm{SNR}_{s,u}}$ is the channel symbol signal-to-noise ratio ($\mathrm {SNR}$). The uplink symbol vector of size ${\mathbf x}_u$ of size $\sum_{g}^G{N_{d,g}}\times 1$ has i.i.d. components drawn from a QAM constellation of order $M$.
The corresponding downlink equation can then be derived by using for the downlink channel matrix ${\mathbf H}_d={\mathbf H}_u^h$, assuming that Time Division Duplexing (TDD) is employed in the system, to be
\begin{equation}
{\mathbf y_d} = {\mathbf H}_u^h  {\mathbf x}_d + {\mathbf n_d},
\end{equation}
where ${\mathbf y}_d$ is the downlink received vector of size $\sum_{g=1}^G N_{d,g}\times 1$, ${\mathbf x}_d$ is the $N_u\times 1$ vector of transmitted symbols drawn independently from a QAM constellation, and the vector ${\mathbf n}_d$ of size $\sum_{g=1}^G N_{d,g}\times1$ is the downlink circularly symmetric complex Gaussian noise with independent components.
The optimal CSIT precoder ${\mathbf P}$ needs to satisfy
\begin{eqnarray}
\begin{aligned}
& \underset{\mathbf P}{\text{maximize}}
& & I({\mathbf x_d};{\mathbf y_d})\\
& \text{subject to}
& &  \mathrm {tr}({\mathbf P} {\mathbf P}^h) = N_u, \\ \label{eq_MMIMO}
\end{aligned}
\end{eqnarray}
where the constraint is due to keeping the total power emitted from the $N_u$ antennas constant.

The problem in (\ref{eq_MMIMO}) results in exponential complexity at both transmitter and receiver, and it becomes especially difficult for QAM with constellation size $M \geq 16$ or large MIMO configurations. There are two major difficulties in (\ref{eq_MMIMO}): a) There are $N_u$ input symbols in (\ref{eq_MMIMO}) where $N_u$ is very large, thus making the design of the precoder and its optimization practically impossible, and b) The decoding operation at the receiver needs to be performed by employing all elements of ${\mathbf y}_d$ simultaneously, another impossible demand due to the users being distributed over the entire cell. In order to circumvent these difficulties, the JSDM concept was proposed in \cite{JSDM1} which divides users into groups based on channel similarity. However, a major impediment to deploying JSDM in practice has been the lack of a simple way that identifies the different groups of users with ease. Furthermore, \cite{JSDM1} has employed Gaussian input symbols, an assumption that can lead to discrepancies as far as the precoder performance is concerned, especially in high $ \mathrm {SNR}$ \cite{Xiao,TK_EA_GLOBECOM}.  In this paper, a methodology that employs the virtual channel model, based on the DFT channel angular domain, is employed in order to facilitate the group selection problem in JSDM and then the methodology of \cite{TK_EA_QAM} is employed in order to allow for the design of an optimal overall precoder on a per group basis.

\section{THE NARROWBAND SYSTEM DESCRIPTION UNDER THE VIRTUAL CHANNEL MODEL}
\subsection{Uniform Linear Array (ULA) at the Base with Flat Fading}
We begin with a ULA deployed at the BS along the $z$ direction as depicted in Fig. 1 and for flat fading, i.e., $B< B_{COH}$, where $B,~B_{COH}$ are the RF signal bandwidth and the coherence bandwidth of the channel, respectively. Each user group on the uplink transmits from the same ``cluster'' of elevation angles $\theta_{g} \in [\theta_{g} -\Delta \theta, \theta_{g} +\Delta \theta]$, distributed uniformly in the support interval, thus each user's $k^{(g)}$ of group $g$, ( $1\leq k^{(g)} \leq K_g$ and $1\leq g \leq G$) transmitting antenna $n$ channel, ${\mathbf h}_{u,g,k,n} = \frac{1}{{\sqrt L}}\sum_{l=1}^L \beta_{lgkn} {\mathbf a}(\theta_{lgkn})$, where ${\mathbf a}(\theta_{lgkn}) = [1, \exp(-j{2\pi}D\cos(\theta_{lgkn})),\cdots, \exp(-j{2\pi D(N_u-1)}\cos(\theta_{lgkn}))]^t$ is the array response vector, with $D=d/\lambda$ representing the normalized distance of successive array elements, $\lambda$ being the wavelegth, $\theta_{lgkn}$ is the elevation (incidence) angle of the  $l$ path of group $g$ $k$ user's $n$ receiving antenna, and the path gains $\beta_{lgkn}$ are independent complex Gaussian random variables with zero mean and variance $1$, same for all users in the group. The VCM representation, presented in \cite{Sayeed}, is formed by projecting the original channel ${\mathbf H_u}$ to the $N_u$ dimensional space formed by the $N_u\times N_u$ DFT matrix $F_{N_u}$, with row $k$, column $l$ ($1\leq k,~l\leq N_u$) element equal to $\exp(-j\frac{2\pi}{N_u}(k-1)(l-1))$. For Massive MIMO systems, i.e., when $N_u \gg 1$, the following Lemma 1 and 2 as well as Theorem 1 are true.
\begin{figure}
\centering
\setcounter{figure}{0}
\includegraphics[height=3.6in,width=5.25in]{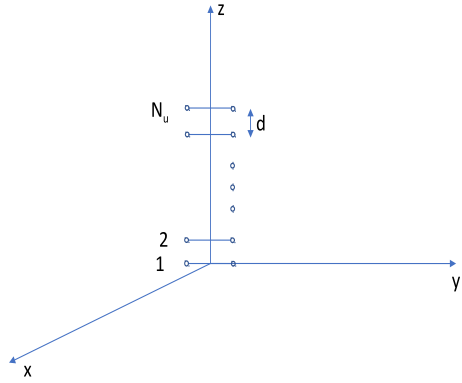}
	\caption{A ULA deployed across the $z$ axis, together with the projection of a tentative transmission point on the $x,~y$ plane.}
\end{figure}
\begin{lemma}
By employing the VCM for a ULA at the BS and under flat fading, the number of non-zero components of the VCM representation is small, i.e., the number of non-zero or significant elements in the channels of each group $g$ VCM representation, $|{\cal S}_g|,$ satisfy $|{\cal S}_g| \ll N_u$. Thus, in the VCM domain, a sparse overall group channel representation results.
\end{lemma}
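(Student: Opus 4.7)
The plan is to compute the DFT projection of a single array response vector $\mathbf{a}(\theta)$ explicitly and show it concentrates on $O(1)$ bins, then glue the bounds over all paths and all users in the group. First I would write the virtual representation of the channel as $\tilde{\mathbf{H}}_u = \mathbf{F}_{N_u}^h \mathbf{H}_u$ and focus on one user/antenna column, which by linearity reduces to bounding $\mathbf{F}_{N_u}^h \mathbf{a}(\theta)$ for $\theta$ in the cluster $[\theta_g - \Delta\theta,\theta_g+\Delta\theta]$. The inner sum defining the $k$-th entry is geometric, so it collapses to a Dirichlet-type kernel
\begin{equation*}
\bigl[\mathbf{F}_{N_u}^h \mathbf{a}(\theta)\bigr]_k \;=\; \frac{1}{\sqrt{N_u}}\, e^{j\pi(N_u-1)\phi_k(\theta)}\,\frac{\sin\bigl(\pi N_u \phi_k(\theta)\bigr)}{\sin\bigl(\pi \phi_k(\theta)\bigr)},
\end{equation*}
where $\phi_k(\theta) = k/N_u - D\cos\theta$. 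I would record that the magnitude is peaked at $k^\ast(\theta) = \lfloor N_u D \cos\theta\rfloor$ with peak value $\sqrt{N_u}$, and that outside a band of width $O(1)$ bins around $k^\ast(\theta)$ the kernel is bounded by $1/(\sqrt{N_u}\,|\sin(\pi\phi_k)|)$, so that energy leakage is $O(1/N_u)$.

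Next I would pass from a single angle to the cluster. Writing $\cos\theta = \cos\theta_g - (\theta-\theta_g)\sin\theta_g + O(\Delta\theta^2)$, the set of peak-bin indices as $\theta$ ranges over $[\theta_g-\Delta\theta,\theta_g+\Delta\theta]$ lies (up to $O(1)$ boundary bins) within an integer interval of length
\begin{equation*}
\bigl|\mathcal{S}_g\bigr| \;\approx\; 2\,N_u\,D\,|\sin\theta_g|\,\Delta\theta \;+\; O(1).
\end{equation*}
Summing over the $L$ paths of a user (which all share $\theta_{lgkn}\in[\theta_g-\Delta\theta,\theta_g+\Delta\theta]$) and then over users in the group enlarges $\mathcal{S}_g$ only by at most a constant factor, since all contributing peak bins fall in the same interval. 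The total energy captured outside this interval is then bounded by $L K_g N_{d,g}\cdot O(1/N_u)$, which is negligible for $N_u \gg 1$.

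Finally I would verify the claim $|\mathcal{S}_g|\ll N_u$. By construction, $2D|\sin\theta_g|\Delta\theta < 1$ whenever the cluster half-width $\Delta\theta$ is small compared with the inverse of the effective beamwidth $(2D|\sin\theta_g|)^{-1}$, which is the standing assumption that user groups occupy narrow angular sectors; in particular, as $N_u\to\infty$ with $\Delta\theta$ fixed, the ratio $|\mathcal{S}_g|/N_u \to 2D|\sin\theta_g|\Delta\theta < 1$ and is small in the regimes of interest. The main obstacle I anticipate is handling the Dirichlet kernel tails rigorously rather than heuristically, namely giving a quantitative bound on $\sum_{k\notin\mathcal{S}_g}|\mathbf{F}_{N_u}^h\mathbf{a}(\theta)|^2$ that is uniform in $\theta$ over the cluster; this can be done by splitting the sum into a main lobe of width $c$ bins and a tail controlled by $\sum_{|\phi|>c/N_u} 1/(N_u\sin^2(\pi\phi))$, which is $O(1/c)$. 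Picking $c$ to grow slowly with $N_u$ then simultaneously yields $|\mathcal{S}_g|\ll N_u$ and vanishing leakage, which is the desired sparse VCM representation.
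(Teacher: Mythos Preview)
Your proposal is correct and follows essentially the same route as the paper: project $\mathbf{a}(\theta)$ onto the DFT basis, observe that the resulting Dirichlet-type kernel concentrates within $O(1)$ bins of $N_u D\cos\theta$, and then count the range of peak bins as $\theta$ sweeps the cluster to obtain $|\mathcal{S}_g|\approx 2N_u D|\sin\theta_g|\,\Delta\theta + O(1)$. The paper's version is more heuristic---it simply invokes the main-lobe condition $|\cos\theta - p/(DN_u)|\le 1/(DN_u)$ without writing out the kernel or bounding the tails---whereas you make the Dirichlet kernel and the leakage estimate explicit, but the underlying argument and the final bound are the same.
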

\begin{proof}
By projecting each group channel ${\mathbf H}_{g}$ on the DFT virtual channel space \cite{Sayeed}, we get
\begin{equation}
{\tilde {\mathbf H}}_g = {\mathbf F}_{N_u}^h {\mathbf H}_g,
\end{equation}
where
${\mathbf F}_{N_u}$ is the DFT matrix of order $N_u$. Since each group attains the same angular behavior, over all users and antennas in the group, only a few, consecutive elements of ${\tilde {\mathbf H}}_g$ will be significant \cite{BEM}. This comes as a result of the fact that significant angular components need to be in the  main lobe of the response vector, i.e., the condition
 \begin{equation}
|\cos(\theta_{lgkn})-\frac{p}{DN_u}| \leq \frac{1}{DN_u}, \label{resolve1}
\end{equation}
   with $D = \frac{d}{\lambda}$, needs to be satisfied for angular component in the VCM $p$ ($1\leq p \leq N_u$) to be significant, i.e., with power $> 1$. From (\ref{resolve1}), we can easily see that the corresponding condition over the significant components becomes
   \begin{equation}
   DN_u\cos(\theta_{lgkn})-1 \leq p \leq DN_u\cos(\theta_{lgkn})+1,
   \end{equation}
   i.e., there are 3 significant non-zero components in the VCM representation for each channel's path. Since each path contains a different angle, due to the ULA model presented above, this number will be increased, but will be upper-bounded by $DN_u|\cos(\theta_{g}+\Delta \theta)-\cos(\theta_{g}-\Delta \theta) | +3 = 3 + 2DN_u|\sin(\theta_g)\sin(\Delta \theta)|\approx 3 + 2DN_u|\sin(\theta_g)|(\Delta \theta)$, where $\Delta \theta$ is in radians. For a typical scenario, $N_u =100$, $D=1/2$, $\theta_g = 30^\circ$, and $\Delta \theta = 4^\circ = 0.0698~\mathrm{radian}$, then the maximum number of non-zero (significant) paths is upper-bounded by 7.
\end{proof}
\begin{lemma}
Within the premise of the previous Lemma, if  $\cos(\theta_g-\Delta \theta ) < \cos(\theta_{g'}+\Delta \theta) -\frac{2}{DN_u}$, where $g$ and $g'$ represent two different groups ($g \neq g'$) and with $\theta_g > \theta_{g'}~ \text{and}~ 0\leq \theta_g,~ \theta_{g'} \leq 90^\circ$, then their support sets for each group are mutually exclusive, thus their corresponding virtual channel model beams (VCMB) become orthogonal. A similar relationship holds in the remaining quadrants.
\end{lemma}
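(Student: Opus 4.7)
The plan is to read off each group's VCM support interval from Lemma~1 and then impose disjointness of the two intervals. From Lemma~1, every significant index $p$ associated with group $g$ satisfies $DN_u\cos(\theta_{lgkn})-1 \leq p \leq DN_u\cos(\theta_{lgkn})+1$ for some path angle $\theta_{lgkn}\in[\theta_g-\Delta\theta,\theta_g+\Delta\theta]$. Since cosine is strictly monotone decreasing on $[0^\circ,90^\circ]$, the support set of group $g$ satisfies
\begin{equation}
{\cal S}_g \subseteq \bigl[DN_u\cos(\theta_g+\Delta\theta)-1,\; DN_u\cos(\theta_g-\Delta\theta)+1\bigr],
\end{equation}
and similarly for ${\cal S}_{g'}$ with $g$ replaced by $g'$.

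Next, I would use the hypothesis $\theta_g>\theta_{g'}$, which places the $g$-interval entirely at smaller indices than the $g'$-interval. Disjointness of the two intervals therefore reduces to requiring that the right endpoint of ${\cal S}_g$ lie strictly below the left endpoint of ${\cal S}_{g'}$, i.e.,
\begin{equation}
DN_u\cos(\theta_g-\Delta\theta)+1 \;<\; DN_u\cos(\theta_{g'}+\Delta\theta)-1,
\end{equation}
which rearranges to the condition stated in the lemma. Once the supports are disjoint, the two groups' non-zero VCM entries correspond to disjoint columns of ${\mathbf F}_{N_u}$; since the DFT basis is orthonormal, the resulting pre-beamforming beams for the two groups are orthogonal, proving the VCMB orthogonality claim.

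For the final sentence about the remaining quadrants, I would repeat the same argument with the correct monotonicity of cosine: on $[90^\circ,180^\circ]$ cosine is strictly increasing, so the roles of $\theta\pm\Delta\theta$ swap and one obtains the symmetric condition $\cos(\theta_{g'}-\Delta\theta)<\cos(\theta_g+\Delta\theta)-\tfrac{2}{DN_u}$ under $\theta_g<\theta_{g'}$, and analogously for the other quadrant pairings. The main obstacle I anticipate is purely bookkeeping: being careful with the strict-versus-non-strict inequalities introduced by the $\pm 1$ resolution margins of Lemma~1, and handling the wrap-around of DFT indices when $\cos(\cdot)$ changes sign, so that the disjointness of intervals in the real line correctly translates into disjointness of integer index sets in $\{1,\ldots,N_u\}$ modulo $N_u$.
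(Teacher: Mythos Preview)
Your argument is correct and essentially identical to the paper's: the paper simply rewrites the hypothesis as $\cos(\theta_g-\Delta\theta)+\tfrac{1}{DN_u}<\cos(\theta_{g'}+\Delta\theta)-\tfrac{1}{DN_u}$, invokes the resolution condition~(\ref{resolve1}) from Lemma~1 to conclude the two support intervals are disjoint, and then defers the orthogonality of the DFT-column beams to Theorem~1. One small correction to your last paragraph: cosine is still strictly \emph{decreasing} on $[90^\circ,180^\circ]$ (it descends from $0$ to $-1$), so the roles of $\theta\pm\Delta\theta$ do not swap there; the paper itself does not spell out the other-quadrant conditions and just asserts that analogous inequalities can be written.
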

\begin{proof}
When $\theta_g > \theta_{g'}~ \text{and}~ 0\leq \theta_g,~ \theta_{g'} \leq 90^\circ$, since the $\cos(\cdot)$ function is decreasing in this quadrant, we can easily see that the two support sets for the two groups, ${\cal S}_{g},~{\cal S}_{g'},$ will be disjoint. This comes from the fact that the assumed condition is equivalent to
\begin{equation}
\cos(\theta_g-\Delta \theta ) + \frac{1}{DN_u}< \cos(\theta_{g'}+\Delta \theta) -\frac{1}{DN_u},
\end{equation}
which means that the two support sets are not overlapping, by virtue of (\ref{resolve1}). We can develop similar conditions for all remaining quadrants.
Thus, by assuming adequate spatial separation between groups, we can ensure that the support sets of each group in the virtual channel representation do not overlap. Then, due to the non-overlapping of the support sets, there exists orthogonality between the components of each group in the virtual channel model, as it is next shown.
\end{proof}
\setcounter{theorem}{0}
\begin{theorem}
By employing the VCM for a ULA at the BS and under flat fading, provided user groups are sufficiently geographically apart, as per previous lemma, the channel model of the entire downlink channel can be expressed in a fashion that is fully suitable for JSDM type of processing where different groups become orthogonal and the downlink precoder is designed on a per group basis employing the virtual channel model representation alone. In the resulting JSDM type of decomposition, the corresponding group channel matrices are the virtual channel matrices of the group VCM projections and the group pre-beamforming matrices are the group's non-zero (significant) VCM beamforming directions.
\end{theorem}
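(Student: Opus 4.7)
The plan is to build the JSDM decomposition explicitly from the VCM projection of each group's channel, and then to leverage Lemmas 1 and 2 to get inter-group isolation. First I would write ${\mathbf H}_u = {\mathbf F}_{N_u}\tilde{\mathbf H}_u$, which by columns yields ${\mathbf H}_g = {\mathbf F}_{N_u}\tilde{\mathbf H}_g$ for every group $g$. Lemma 1 guarantees that $\tilde{\mathbf H}_g$ has at most $|{\cal S}_g|$ significant rows, indexed by a small support set ${\cal S}_g$, so by retaining only those rows (and, correspondingly, only the DFT columns indexed by ${\cal S}_g$) the product collapses to ${\mathbf H}_g = {\mathbf U}_g\,\tilde{\mathbf H}_g^{({\cal S}_g)}$, where ${\mathbf U}_g$ is the $N_u \times |{\cal S}_g|$ submatrix of ${\mathbf F}_{N_u}$ with columns indexed by ${\cal S}_g$, and $\tilde{\mathbf H}_g^{({\cal S}_g)}$ is the compressed virtual channel matrix of the group. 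I would then identify ${\mathbf U}_g$ as the candidate pre-beamforming matrix and $\tilde{\mathbf H}_g^{({\cal S}_g)}$ as the effective per-group channel on which JSDM operates; these are exactly the identifications claimed by the theorem.

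Next I would invoke Lemma 2: the spatial-separation hypothesis forces ${\cal S}_g \cap {\cal S}_{g'} = \emptyset$ whenever $g \neq g'$. Because the columns of ${\mathbf F}_{N_u}$ are orthonormal and ${\mathbf U}_g, {\mathbf U}_{g'}$ pick disjoint subsets of them, this immediately yields ${\mathbf U}_g^h {\mathbf U}_{g'} = {\mathbf 0}$ and ${\mathbf U}_g^h {\mathbf U}_g = {\mathbf I}$. Assembling the overall precoder as ${\mathbf P} = [\,{\mathbf U}_1 {\mathbf P}_1, \ldots, {\mathbf U}_G {\mathbf P}_G\,]$ and substituting into ${\mathbf y}_d = {\mathbf H}_u^h {\mathbf P}{\mathbf x}_d + {\mathbf n}_d$, the cross terms $(\tilde{\mathbf H}_g^{({\cal S}_g)})^h {\mathbf U}_g^h {\mathbf U}_{g'} {\mathbf P}_{g'}$ vanish for $g \neq g'$, and the slice of ${\mathbf y}_d$ corresponding to group $g$ reduces to $(\tilde{\mathbf H}_g^{({\cal S}_g)})^h {\mathbf P}_g {\mathbf x}_g + {\mathbf n}_{d,g}$. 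This matches precisely the JSDM model of \cite{JSDM1}, and it shows that the per-group mutual information decouples, so that ${\mathbf P}_g$ can be designed using only the compressed VCM channel $\tilde{\mathbf H}_g^{({\cal S}_g)}$, as asserted.

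The main obstacle, and where I would spend most care, is the optimality claim on the pre-beamforming choice ${\mathbf U}_g$, i.e., that restricting group $g$'s transmit energy to $\operatorname{range}({\mathbf U}_g)$ incurs no loss of mutual information. The intuition is that any energy placed on a DFT direction outside ${\cal S}_g$ is either annihilated by $\tilde{\mathbf H}_g$ or leaks only into the insignificant components identified in Lemma 1, while still consuming part of the trace budget $\operatorname{tr}({\mathbf P}{\mathbf P}^h) = N_u$; hence at the optimum the precoder columns for group $g$ must lie in $\operatorname{range}({\mathbf U}_g)$. Formalizing this is delicate because Lemma 1 provides only approximate VCM sparsity for finite $N_u$. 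I would handle this by invoking the Massive MIMO asymptotic regime $N_u \gg 1$, under which the leakage outside ${\cal S}_g$ becomes negligible and the restricted choice ${\mathbf U}_g$ is JSDM-optimal. The remaining statement, that ${\mathbf P}_g$ can be further optimized per group under QAM inputs, then follows by applying the Gauss--Hermite/PGP finite-alphabet machinery of \cite{TK_EA_QAM} directly to the decoupled channels $\tilde{\mathbf H}_g^{({\cal S}_g)}$.
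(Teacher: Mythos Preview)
Your decomposition argument is correct and essentially identical to the paper's: the paper writes ${\mathbf F}_{N_u,{\cal S}_g}$ for your ${\mathbf U}_g$ and ${\mathbf H}_{g,v}$ for your $\tilde{\mathbf H}_g^{({\cal S}_g)}$ (introducing them via an explicit selection matrix ${\mathbf S}_g$ rather than by ``retaining rows''), uses the disjoint supports from Lemma~2 to obtain ${\mathbf H}_g^h{\mathbf F}_{N_u,{\cal S}_{g'}}={\mathbf 0}$ for $g\neq g'$, and then writes out the full block system with pre-beamformer ${\mathbf B}=[{\mathbf F}_{N_u,{\cal S}_1}\cdots{\mathbf F}_{N_u,{\cal S}_G}]$ and block-diagonal $\{{\mathbf P}_g\}$ to arrive at the per-group decoupled model ${\mathbf y}_{d,g}={\mathbf H}_{g,v}^h{\mathbf P}_g{\mathbf x}_g+{\mathbf n}_g$. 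That is exactly your computation in different notation.

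Where you diverge is your third paragraph. The paper does \emph{not} prove optimality of the pre-beamformers inside Theorem~1; it merely remarks ``as we show below these pre-beamformers are optimal'' and defers the claim to Theorem~2. There the argument is not the energy-leakage/asymptotic reasoning you sketch, but a structural SVD computation: invoking \cite{Xiao}, the optimal precoder's left singular vectors must equal the Hermitian of the effective channel's right singular vectors, and one checks directly that if ${\mathbf H}_{g,v}^h={\mathbf U}{\mathbf S}{\mathbf V}^h$ then the right singular vectors of ${\mathbf H}_{g,v}^h{\mathbf F}_{N_u,{\cal S}_g}^h$ are ${\mathbf F}_{N_u,{\cal S}_g}{\mathbf V}$, so the factor ${\mathbf F}_{N_u,{\cal S}_g}$ is absorbed exactly into the optimal precoder at no cost. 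This is sharper than your leakage argument and, in particular, does not need an additional appeal to the large-$N_u$ regime beyond what Lemmas~1--2 already use. For the statement as posed you can simply drop that paragraph; if you want to include optimality, the SVD route is both cleaner and what the paper actually does.
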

\begin{proof}
By employing a size $|{\cal S}_g|\times N_u$ selection matrix\footnote{A selection matrix ${\mathbf S}^t$ of size $k\times n$ with $k<n$ consists of rows equal to different unit row vectors ${\mathbf e}_i$ where the row vector element $i$ is equal to  $1$ in the $i$th position and is equal to $0$ in all other positions. Such a matrix has the property that ${\mathbf S}^t {\mathbf S} = {\mathbf I}$.}
\begin{equation}
{{\mathbf H}_{g,v}}= {\mathbf S}_g^t{\tilde{\mathbf H}}_g={\mathbf S}_g^t{\mathbf F}_{N_u}^h {\mathbf H}_g,
\end{equation}
where the group $g$ virtual channel matrix is a reduced size, $r_g\times N_{d,g}$, matrix, with $r_g = |{\cal S}_{g}|$ the number of significant angular components in group $g$, due to the sparsity available in the angular domain. We can then write for the uplink group $g$ channel matrix ${\mathbf H}_g$,
\begin{equation}
{\mathbf H}_g = {\mathbf F}_{N_u}{\mathbf S}_g{\mathbf S}_g^t{\mathbf F}_{N_u}^h {\mathbf H}_g={\mathbf F}_{N_u, {\cal S}_g}{{\mathbf H}_{g,v}},
\end{equation}
where ${\mathbf F}_{N_u, {\cal S}_g}$ represents the selected columns of ${\mathbf F}_{N_u}$ due to its sparse representation in the angular domain. We can then write that due to non-overlapping supports in groups $g$, $g'$,  ${\cal S}_{g}\cap_{g\neq g'}{\cal S}_{g'} =\emptyset$, that
    \begin{equation}
    {\mathbf H}_{g}^h {\mathbf F}_{N_u, {\cal S}_g'} = {\mathbf 0},
    \end{equation}
    for $g\neq g'$. By TDD channel reciprocity, the group $g$ downlink channel matrix is given as
    \begin{equation}
    {\mathbf H}_{d,g} = {\mathbf H}_{g}^h = {{\mathbf H}_{g,v}}^h{\mathbf F}_{N_u, {\cal S}_g'}^h.
    \end{equation}
Since each group attains its non-zero virtual channel representation at non-overlapping positions, we can then use pre-beamformers provided by the matrix ${\mathbf B}=[{\mathbf F}_{N_u,~{\cal S}_1} \cdots {\mathbf F}_{N_u,~{\cal S}_G}]$. As we show below these pre-beamformers are optimal for the type of JSDM presented here.
Then, due to non-overlapping of the support sets, i.e., ${\cal S}_n\cap_{m\neq n}{\cal S}_m =\emptyset$, we see that the system becomes approximately orthogonal inter-group wise, i.e., $\sum_{m\neq g}{\mathbf H}_{\cal N}^h\cdot {\mathbf W}_{{\cal S}_g}^t{\mathbf W}_{{\cal S}_m}^{*} \approx 0$. Then,
\begin{equation}
\begin{split}
 {\mathbf y}_d =  \left[ \begin{array}{c} {{\mathbf H}_{1,v}}^h{\mathbf F}_{N_u, {\cal S}_1}^h \\
{{\mathbf H}_{2,v}}^h{\mathbf F}_{N_u, {\cal S}_2}^h \\
 \vdots\\
 {{\mathbf H}_{G,v}}^h{\mathbf F}_{N_u, {\cal S}_G}^h \end{array} \right]
 \left[ \begin{array}{c c c c} {\mathbf F}_{N_u, {\cal S}_1} &
{\mathbf F}_{N_u, {\cal S}_2} & \cdots &
{\mathbf F}_{N_u, {\cal S}_G} \end{array} \right]\\
\left[
\begin{array}{cccccc}
{\bf P}_1 & {\bf 0} & {\bf 0} & \cdots & {\bf 0} & {\bf 0}\\
{\bf 0} & {\bf P}_2 & {\bf 0} & \cdots & {\bf 0} & {\bf 0}\\
{\bf 0} & {\bf 0} & {\bf P}_3 & \cdots & {\bf 0} & {\bf 0}\\
\vdots & \vdots & \vdots & \ddots & \vdots & \vdots\\
{\bf 0} & {\bf 0} & {\bf 0} & \cdots &{\bf P}_{G-1} & {\bf 0}\\
{\bf 0} & {\bf 0} & {\bf 0} & \cdots &{\bf 0} & {\bf P}_G\\
\end{array}
\right]
 \left[ \begin{array}{c} {{\mathbf x}_{1}} \\
{{\mathbf x}_{2}} \\
 \vdots\\
 {{\mathbf x}_{G}} \end{array} \right]+ {\mathbf n}, \label{full_jsdm}
 \end{split}
\end{equation}
where for $1\leq g \leq G$, ${\mathbf H}_{g,v}^h$  is a size $N_{d,g}\times |{\cal S}_g|$ matrix, ${\mathbf F}_{N_u, {\cal S}_G}$ is a size $|{\cal S}_g| \times N_u$ matrix, ${\mathbf P}_{g}$ is a size $|{\cal S}_g|\times |{\cal S}_g|$ matrix, and ${\mathbf x}_g$ is the group $g$ downlink symbol vector of size $|{\cal S}_g|\times 1$.
Now
due to orthogonality, we can write equivalently
\begin{equation}
\begin{split}
 {\mathbf y}_d &=   \left[ \begin{array}{c} {{\mathbf H}_{1,v}}^h\\
{{\mathbf H}_{2,v}}^h\\
 \vdots\\
 {{\mathbf H}_{G,v}}^h \end{array} \right]
\left[
\begin{array}{cccccc}
{\bf P}_1 & {\bf 0} & {\bf 0} & \cdots & {\bf 0} & {\bf 0}\\
{\bf 0} & {\bf P}_2 & {\bf 0} & \cdots & {\bf 0} & {\bf 0}\\
{\bf 0} & {\bf 0} & {\bf P}_3 & \cdots & {\bf 0} & {\bf 0}\\
\vdots & \vdots & \vdots & \ddots & \vdots & \vdots\\
{\bf 0} & {\bf 0} & {\bf 0} & \cdots &{\bf P}_{G-1} & {\bf 0}\\
{\bf 0} & {\bf 0} & {\bf 0} & \cdots &{\bf 0} & {\bf P}_G\\
\end{array}
\right]
 \left[ \begin{array}{c} {{\mathbf x}_{1}} \\
{{\mathbf x}_{2}} \\
 \vdots\\
 {{\mathbf x}_{G}} \end{array} \right]+ {\mathbf n}\\
 &= \left[ \begin{array}{c} {{\mathbf H}_{1,v}}^h{\mathbf{P_1}} {{\mathbf x}_{1}}\\
{{\mathbf H}_{2,v}}^h{\mathbf{P_2}}{{\mathbf x}_{2}}\\
 \vdots\\
 {{\mathbf H}_{G,v}}^h{\mathbf{P_G}}  {{\mathbf x}_{G}}\end{array} \right]
+ {\mathbf n}. \label{simple_jsdm}
 \end{split}
\end{equation}
\end{proof}
Since each group's precoding becomes independent of other groups, the overall downlink precoding becomes much easier and less complex for both the transmitter and the receiver. In addition, the introduction of the pre-beamformers in the form of VCM beamforming directions also simplifies the RF chains \cite{JSDM1}. The individual precoding of each group becomes now the optimization of  a $|{\cal S}_g|\times |{\cal S}_g|$ precoding matrix ${\mathbf P}_g$, as per the next theorem.

\begin{theorem}
For each group $g$ in the VCM representation, the equivalent optimum precoder, ${\mathbf P}_g$ needs to satisfy
\begin{eqnarray}
\begin{aligned}
& \underset{{\mathbf P}_g}{\normalfont \text{maximize}}
& & I({\mathbf x_{d,g}};{\mathbf y_{d,g}})\\
& {\normalfont \text{subject to}}
& &  \mathrm {tr}({\mathbf P}_g {\mathbf P}_g^h) = N_{d,g}, \\ \label{eq_MMIMO_g}
\end{aligned}
\end{eqnarray}
where the group $g$ reception model becomes
\begin{equation}
{\mathbf y}_{d,g} = {{\mathbf H}_{g,v}}^h {\mathbf P}_g {\mathbf x}_g + {\mathbf n}_g,
\end{equation}
where ${{\mathbf H}_{g,v}}^h$ is the VCM group's downlink matrix of size $N_{d,g}\times |{\cal S}_g|$, ${\mathbf y}_{d,g}$ is the group's size $N_{d,g}$ reception vector, and ${\mathbf n}_g$ is the corresponding noise.
This per group precoding problem is equivalent to a precoding problem within the original group channel model, i.e., the VCM transformation does not result in mutual information gain loss in the precoding process. In other words, employing ${\mathbf F}_{N_u, {\cal S}_g}$ as beamforming matrix per each group $g$ ($1\leq g \leq G$), is optimal from a maximization of input-output mutual information standpoint.
\end{theorem}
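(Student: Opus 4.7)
The plan is to reduce the global precoder design in (\ref{eq_MMIMO}) to $G$ decoupled per-group problems via the block-diagonal JSDM structure established in Theorem~1, and then to argue that the VCM formulation is an information-preserving reparametrization of each per-group problem with ${\mathbf F}_{N_u,{\cal S}_g}$ as the optimal pre-beamformer.

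First, starting from (\ref{simple_jsdm}), the received signal takes the per-group form ${\mathbf y}_{d,g}={\mathbf H}_{g,v}^h {\mathbf P}_g {\mathbf x}_g + {\mathbf n}_g$ with independent symbol vectors across $g$ and independent noise components. The chain rule then gives $I({\mathbf x}_d;{\mathbf y}_d)=\sum_{g=1}^G I({\mathbf x}_g;{\mathbf y}_{d,g})$, and the global trace constraint $\mathrm{tr}({\mathbf P}{\mathbf P}^h)=N_u$ splits additively as $\sum_g \mathrm{tr}({\mathbf P}_g{\mathbf P}_g^h)$, with the per-group allocation $\mathrm{tr}({\mathbf P}_g{\mathbf P}_g^h)=N_{d,g}$ obtained by distributing the aggregate power in proportion to the number of symbol streams served by each group. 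The global maximization therefore decouples into $G$ independent problems of the form stated in the theorem.

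Second, to show that the VCM formulation preserves the mutual information that would be attained on the original channel ${\mathbf H}_g$, I would exploit the unitarity of ${\mathbf F}_{N_u}$ together with the factorization ${\mathbf H}_g={\mathbf F}_{N_u,{\cal S}_g}{\mathbf H}_{g,v}$ from Lemma~1. Choosing ${\mathbf F}_{N_u,{\cal S}_g}$ as pre-beamformer produces the effective downlink channel ${\mathbf H}_g^h {\mathbf F}_{N_u,{\cal S}_g}={\mathbf H}_{g,v}^h$ because ${\mathbf F}_{N_u,{\cal S}_g}^h {\mathbf F}_{N_u,{\cal S}_g}={\mathbf I}$, and multiplication by a matrix with orthonormal columns leaves the i.i.d. Gaussian noise invariant in distribution (up to a unitary rotation on the projected subspace). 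Consequently, $I({\mathbf x}_g;{\mathbf y}_{d,g})$ computed in the VCM coordinates coincides with the mutual information on the original group channel, proving the no-loss claim.

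Third, for optimality of the pre-beamformer among all $N_u\times |{\cal S}_g|$ choices, I would use a column-space argument: by Lemma~1 the column span of ${\mathbf H}_g$ is essentially the $|{\cal S}_g|$-dimensional subspace $\mathrm{span}({\mathbf F}_{N_u,{\cal S}_g})$, so any alternative pre-beamformer ${\mathbf B}_g$ whose column space differs from this either misses part of the signal subspace, strictly lowering the rank of the effective channel and hence the mutual information, or expends power on dimensions orthogonal to the channel, which by the data-processing inequality cannot help. Up to an inconsequential unitary rotation, ${\mathbf F}_{N_u,{\cal S}_g}$ is therefore the unique minimum-dimensional pre-beamformer exactly spanning the signal subspace, which yields both achievability and the converse. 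The hard part, I expect, is the qualifier ``essentially'': Lemma~1 only bounds the out-of-support components away from the dominant ones rather than forcing them to zero, so the factorization ${\mathbf H}_g={\mathbf F}_{N_u,{\cal S}_g}{\mathbf H}_{g,v}$ and the inter-group orthogonality are strictly approximate. A fully rigorous argument must therefore either invoke the massive MIMO asymptotic $N_u\to\infty$, in which the truncation error vanishes, or explicitly bound the mutual-information loss from discarding sub-threshold components and show it is negligible; once exactness of the support is granted, the remaining manipulations are standard unitary-invariance algebra.
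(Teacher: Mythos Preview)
Your proposal is correct and reaches the same conclusion, but the route differs from the paper's. The paper confines itself to the no-loss claim and argues via the SVD characterization of optimal finite-alphabet precoders from \cite{Xiao}: writing the effective group channel as ${\mathbf H}_{g,v}^h{\mathbf F}_{N_u,{\cal S}_g}^h$ and the effective precoder as ${\mathbf F}_{N_u,{\cal S}_g}{\mathbf P}_g$, it notes that the optimal precoder's left singular vectors must match the channel's right singular vectors, computes the SVD ${\mathbf H}_{g,v}^h={\mathbf U}{\mathbf S}{\mathbf V}^h$, and observes that the right singular vectors of ${\mathbf H}_{g,v}^h{\mathbf F}_{N_u,{\cal S}_g}^h$ are exactly ${\mathbf F}_{N_u,{\cal S}_g}{\mathbf V}$ (using $N_u>|{\cal S}_g|$, $N_u>N_{d,g}$), so the pre-beamformer ${\mathbf F}_{N_u,{\cal S}_g}$ is automatically absorbed into the optimal structure. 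Your argument instead works at the level of subspaces and information: orthonormality of the columns of ${\mathbf F}_{N_u,{\cal S}_g}$ gives ${\mathbf H}_g^h{\mathbf F}_{N_u,{\cal S}_g}={\mathbf H}_{g,v}^h$ directly, unitary invariance of Gaussian noise and mutual information gives the no-loss claim, and a column-space/data-processing argument gives optimality among all pre-beamformers. The paper's proof is shorter because it outsources the heavy lifting to \cite{Xiao}; yours is more self-contained and also makes explicit the decoupling step and the power split $\mathrm{tr}({\mathbf P}_g{\mathbf P}_g^h)=N_{d,g}$, which the paper leaves implicit. Your caveat about the approximate nature of the support truncation is well taken and is not addressed in the paper's proof either.
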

\begin{proof}
The only part of the theorem that needs proof is the one relating to no information loss. This is easy to prove, since the model in (\ref{full_jsdm}) relies equivalently on a $ {\mathbf F}_{N_u, {\cal S}_g}{\mathbf P}_g$ precoder and the channel is ${{\mathbf H}_{g,v}}^h{\mathbf F}_{N_u, {\cal S}_g}^h$, the optimal precoder's left singular vector matrix has to be equal to the Hermitian matrix of the right singular vector matrix of ${{\mathbf H}_{g,v}}^h{\mathbf F}_{N_u, {\cal S}_g}^h$ \cite{Xiao}. Assume that the Singular Value Decomposition (SVD) of ${\mathbf H}_{g,v}^h= {\mathbf U} {\mathbf S} {\mathbf V}^h.$ Then, it is easy to show that the right singular vector matrix of ${{\mathbf H}_{g,v}}^h{\mathbf F}_{N_u, {\cal S}_g}^h$ is equal to ${\mathbf F}_{N_u, {\cal S}_g}{\mathbf V}$, under the condition that $N_u > |{\cal S}_g|,~N_u >  N_{d,g}$, which is true in Massive MIMO. Thus, based on this theorem, the pre-beamformers applied herein are optimal for the JSDM system presented.
\end{proof}

Having put forth the premise for the model, after the basic theorems of our model are proven, we can proceed to precode with the ${\mathbf P}_g$ precoders for each group and apply PGP \cite{TE_TWC} which divides each group further into subgroups in order to simplify both the transmitter as well as the receiver complexity exponentially with small gain loss \cite{TE_TWC}.  Further, by combining PGP with the Gauss-Hermite approximation, we are able to derive the PGP optimal precoder for finite QAM constellations with constellation size $M$ easily \cite{TK_EA_QAM}.
\subsection{Uniform Planar Array (UPA) at the Base}
The concept generalizes easily to  Uniform Planar Arrays (UPA), both for arrays formed in the $z,~x$ plane as well as in the $x,~y$ plane, a UPA deployed along the $x,~y$ plane is shown in Fig. 2. The theory behind planar arrays results in a Kronecker product of two virtual channels, one channel per array dimension, as shown below. UPAs result in a three-dimensional spatial representation, thus offering higher user capacity per cell. Among the two UPA possibilities, we present the analysis for an $x,~z$ direction deployed UPA, as the analysis for an $x,~y$ direction deployed UPA is very similar.
\begin{figure}
\centering
\setcounter{figure}{1}
\includegraphics[height=3.6in,width=5.25in]{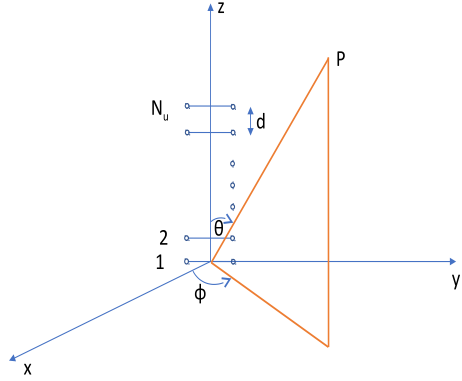}
	\caption{A UPA deployed across the $x,~y$ axes, together with the projection of a tentative transmission point on the $x,~y$ plane.}
\end{figure}
For a UPA formed on $x,~z$ directions, each group $g$'s uplink channel, ${\mathbf H}_g$, corresponds to the combination of $N_{u,x}$ uniform linear arrays deployed along the $x$ direction with $N_{u,z}$ uniform linear elements deployed in the $z$ direction. Without loss in generality, we assume that the normalized distances are the same for each direction and equal to $D$. UPAs results in a two-dimensional (matrix) antenna response matrix per user, group, and antenna expressed as
\begin{equation}
\begin{split}
{\mathbf H}_{u,g,k,n} = \sum_{l=1}^{L}\beta_{lgkn}{\mathbf a}_x(\theta_{lgkn},\phi_{lgkn}){\mathbf a}_z^t(\theta_{lgkn}), \label{eq_UPA}
\end{split}
\end{equation}
where the path gain $\beta_{lgkn}$ is as in the ULA case, $\theta_{lgkn}$ is the elevation angle for the $z$-element, same as in the ULA case, and $\phi_{lgkn}$ is the azimuth angle of user $k$'s $n$ antenna for group $g$, assumed to be a uniform r.v. in the interval $ [\phi_{g} -\Delta \phi, \phi_{g} +\Delta \phi]$. In (\ref{eq_UPA}), the two spatial vectors ${\mathbf a}_x(\theta_{lgkn},\phi_{lgkn}),~{\mathbf a}_z(\theta_{lgkn})$ are given as
\begin{equation}
        \begin{split}
        {\mathbf a}_z(\theta_{lgkn}) = &[1, \exp(-j{2\pi}D\cos(\theta_{lgkn})),\cdots, \\ &\exp(-j{2\pi D(N_{u,z}-1)}\cos(\theta_{lgkn}))]^t,
        \end{split}
        \end{equation}
         and
\begin{equation}
         \begin{split}
         {\mathbf a}_x(\theta_{lgkn},\phi_{lgkn}) = &[1, \exp(-j{2\pi}D
         \sin(\theta_{lgkn})\cos(\phi_{lgkn})),\cdots, \\ &\exp(-j{2\pi D(N_{u,x}-1)}
         \sin(\theta_{lgkn})\cos(\phi_{lgkn}))]^t,
         \end{split}
         \end{equation}
respectively.

By projecting the channel matrix ${\mathbf H}_{u,g,k,n}$ to both angular directions, i.e., on VCM for $z,~x$ directions, we get
\begin{equation}
\begin{split}
{\tilde {\mathbf H}}_{u,g,k,n} = \sum_{l=1}^{L}\beta_{lgkn}{\mathbf F}_{N_{u,x}}^h{\mathbf a}_x(\theta_{lgkn},\phi_{lgkn}){\mathbf a}_z^t(\theta_{lgkn}) {\mathbf F}_{N_{u,z}}^*. \label{eq_UPA_f}
\end{split}
\end{equation}
By taking the vector form of both sides in (\ref{eq_UPA_f}) and using identities from \cite{diff}, e.g., $\mathrm{vec}({\mathbf a}\otimes {\mathbf b}) = {\mathbf b}{\mathbf a}^t$ and $({\mathbf A}\otimes {\mathbf B})({\mathbf C}\otimes {\mathbf D}) = ({\mathbf A}{\mathbf C})\otimes({\mathbf B}{\mathbf D})$, we can write for the vector of ${\tilde {\mathbf H}}_{u,g,k,n}$, ${\tilde {\mathbf h}}_{u,g,k,n} \doteq \mathrm {vec}({\tilde {\mathbf H}}_{u,g,k,n}),$ the following equation
\begin{equation}
\begin{split}
{\tilde {\mathbf h}}_{u,g,k,n} =  \sum_{l=1}^{L}\beta_{lgkn}{\tilde {\mathbf a}}_{z,x}(\theta_{lgkn},\phi_{lgkn}),
\end{split}
\end{equation}
where
        \begin{equation}
        {\tilde {\mathbf a}}_{z,x}(\theta_{lgkn},\phi_{lgkn})= \left({\mathbf F}_{N_u,z}\otimes {\mathbf F}_{N_u,x}\right)^h \left({\mathbf a}_z(\theta_{lgkn})\otimes{\mathbf a}_x(\theta_{lgkn},\phi_{lgkn}) \right). \label{eq_UPA_concl}
        \end{equation}
The behavior in (\ref{eq_UPA_concl}) is similar with the ULA case, i.e., sparsity is achieved and different groups occupy different support sets in the angular domain.
     The expansion basis matrix now for the VCM becomes the Kronecker product of the two DFT matrices ${\mathbf F}_{N_{u,x}}$ and ${\mathbf F}_{N_{u,z}}$.
      The downlink reception model stays within the same premise, but the new Kronecker product basis is employed.
       Due to the Kronecker product, the group sparsity presents some periodicity with period equal to $N_{u,z}$.
       In other words, the reception model now becomes
       \begin{equation}
\begin{split}
 {\mathbf y}_d &=  \left[ \begin{array}{c} {{\mathbf H}_{1,v}}^h \left({\mathbf F}_{N_{u,z}}\otimes {\mathbf F}_{N_{u,x}}\right)_{{\cal S}_1}^h \\
{{\mathbf H}_{2,v}}^h\left({\mathbf F}_{N_{u,z}}\otimes {\mathbf F}_{N_{u,x}}\right)_{{\cal S}_2}^h  \\
 \vdots\\
 {{\mathbf H}_{G,v}}^h\left({\mathbf F}_{N_u,z}\otimes {\mathbf F}_{N_u,x}\right)_{{\cal S}_G}^h \end{array} \right]\\
 &\times\left[ \begin{array}{c c c c} \left({\mathbf F}_{N_{u,z}}\otimes {\mathbf F}_{N_{u,x}}\right)_{{\cal S}_1} &
\left({\mathbf F}_{N_{u,z}}\otimes {\mathbf F}_{N_{u,x}}\right)_{{\cal S}_2} & \cdots &
\left({\mathbf F}_{N_{u,z}}\otimes {\mathbf F}_{N_{u,x}}\right)_{{\cal S}_G} \end{array} \right]\\
 &\times  \left[
\begin{array}{cccccc}
{\bf P}_1 & {\bf 0} & {\bf 0} & \cdots & {\bf 0} & {\bf 0}\\
{\bf 0} & {\bf P}_2 & {\bf 0} & \cdots & {\bf 0} & {\bf 0}\\
{\bf 0} & {\bf 0} & {\bf P}_3 & \cdots & {\bf 0} & {\bf 0}\\
\vdots & \vdots & \vdots & \ddots & \vdots & \vdots\\
{\bf 0} & {\bf 0} & {\bf 0} & \cdots &{\bf P}_{G-1} & {\bf 0}\\
{\bf 0} & {\bf 0} & {\bf 0} & \cdots &{\bf 0} & {\bf P}_G\\
\end{array}
\right]+ {\mathbf n},
 \end{split}
 \end{equation}
 where the notation $({\mathbf A})_{{\cal S}_g}$ means the matrix resulting from selecting the columns of ${\mathbf A}$ that belong to ${\cal S}_g$.
                 The case of a UPA over $x,~y$ dimensions can be treated in a similar way by invoking
                 \begin{equation}
        {\tilde {\mathbf a}}_{y,x}(\theta_{lgkn},\phi_{lgkn})= \left({\mathbf F}_{N_{u,y}}\otimes {\mathbf F}_{N_{u,x}}\right)^h \left({\mathbf a}_y(\theta_{lgkn},\phi_{lgkn})\otimes{\mathbf a}_x(\theta_{lgkn},\phi_{lgkn}) \right),
        \end{equation}
        with
        \begin{equation}
        \begin{split}
        {\mathbf a}_y(\theta_{lgkn}) = &[1, \exp(-j{2\pi}D\sin(\theta_{lgkn})\sin(\phi_{lgkn})),\cdots, \\ &\exp(-j{2\pi} D(N_{u,y}-1))\sin(\theta_{lgkn})\sin(\phi_{lgkn}))]^t,
        \end{split}
        \end{equation}
        and
         \begin{equation}
         \begin{split}
         {\mathbf a}_x(\theta_{lgkn},\phi_{lgkn}) = &[1, \exp(-j{2\pi}D
         \sin(\theta_{lgkn})\cos(\phi_{lgkn})),\cdots, \\ &\exp(-j{2\pi D(N_{u,x}-1)}
         \sin(\theta_{lgkn})\cos(\phi_{lgkn}))]^t.
         \end{split}
         \end{equation}
The sparsity in the UPA case is due to the behavior of both angles, i.e., the elevation and the azimuth ones. The corresponding conditions to Lemma 1 are posted in the next lemma.
\begin{lemma}
In the UPA over $z,~x$ dimensions, when $N_u \doteq N_{u,z}N_{u,x} \gg 1$, then the significant components of the channel for group $g$, i.e., the support set ${\cal S}_g$, are found through the following two conditions
 \begin{equation}
|\cos(\theta_{lgkn})-\frac{p}{DN_{u,z}}| < \frac{1}{DN_{u,z}}, \label{resolve2}
\end{equation}
and
 \begin{equation}
|\sin(\theta_{lgkn})\cos(\phi_{lgkn})-\frac{p}{DN_{u,x}}| < \frac{1}{DN_{u,x}}. \label{resolve3}
\end{equation}
\begin{proof}
The proof stems from generalizing the condition in (\ref{resolve1}) to the geometries of the UPA array. For the $z$ direction the equation remains unchanged, while for the $x$ direction the factor $\cos(\theta_{lgkn})$ needs to be substituted by $\sin(\theta_{lgkn})\cos(\phi_{lgkn})-\frac{p}{DN_{u,x}}$. For significant factors to exist, both conditions need to be satisfied simultaneously, because the composite array factor is the product of the two individual ones. This completes the proof of the lemma.
\end{proof}
\end{lemma}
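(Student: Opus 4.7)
The plan is to mirror the argument used for Lemma 1, but applied in each of the two array dimensions separately, and then combine them through the Kronecker structure of the UPA steering vector. First I would note that the per-path contribution to the UPA channel ${\mathbf H}_{u,g,k,n}$ is the outer product ${\mathbf a}_x(\theta_{lgkn},\phi_{lgkn}){\mathbf a}_z^t(\theta_{lgkn})$, and the VCM projection (\ref{eq_UPA_f}) conjugates these by ${\mathbf F}_{N_{u,x}}^h$ on the left and ${\mathbf F}_{N_{u,z}}^*$ on the right. Because this is separable, the $(p_x,p_z)$ entry of the projected matrix factors as the product of a one-dimensional DFT coefficient of ${\mathbf a}_x$ at bin $p_x$ and a one-dimensional DFT coefficient of ${\mathbf a}_z$ at bin $p_z$.

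Next I would apply the ULA argument of Lemma 1 independently to each of these two 1D projections. For the $z$-axis factor, the steering vector has phase slope $2\pi D \cos(\theta_{lgkn})$, so its DFT magnitude peaks (and is only non-negligible) when the bin index $p$ satisfies the main-lobe condition $|\cos(\theta_{lgkn}) - p/(DN_{u,z})| < 1/(DN_{u,z})$, which is exactly (\ref{resolve2}). For the $x$-axis factor, the steering vector has phase slope $2\pi D \sin(\theta_{lgkn})\cos(\phi_{lgkn})$, so the same main-lobe reasoning, with $\cos(\theta_{lgkn})$ replaced by the effective directional cosine $\sin(\theta_{lgkn})\cos(\phi_{lgkn})$, yields (\ref{resolve3}).

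Finally, I would argue that the $(p_x,p_z)$ component is significant only when \emph{both} 1D factors are significant, since a product of two small numbers is small. Hence the support set ${\cal S}_g$ in the two-dimensional angular grid is exactly the intersection of the two 1D main-lobe conditions, which is what the lemma asserts. The main obstacle I anticipate is being precise about the factorization of the 2D DFT coefficient as a product of 1D DFT coefficients; this boils down to using the Kronecker identity $({\mathbf A}\otimes {\mathbf B})({\mathbf C}\otimes {\mathbf D}) = ({\mathbf A}{\mathbf C})\otimes({\mathbf B}{\mathbf D})$ already invoked in (\ref{eq_UPA_concl}) to turn the double DFT of the outer product into an outer product of two scalar DFTs, after which the 1D arguments apply term-by-term.
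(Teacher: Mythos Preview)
Your proposal is correct and follows essentially the same approach as the paper: generalize the main-lobe condition (\ref{resolve1}) from Lemma~1 to each array dimension separately, replacing $\cos(\theta_{lgkn})$ by the appropriate directional cosine $\sin(\theta_{lgkn})\cos(\phi_{lgkn})$ for the $x$-axis, and then require both conditions simultaneously because the composite array factor is the product of the two one-dimensional factors. Your version is in fact more explicit than the paper's about the separability via the Kronecker identity, but the underlying argument is the same.
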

In comparison to the ULA channel case sparsity behavior though, it is important to stress that UPA channels present a repetitive, semi-periodic sparsity structure, due to the Kronecker product that exists in the vectorized form of the channel vectors. This behavior is further contrasted to the ULA one in Section V where numerical results are used to depict differences between ULA and UPA behavior with regards to sparsity in the VCM representation.
\section{THE WIDEBAND SYSTEM DESCRIPTION UNDER THE VIRTUAL CHANNEL MODEL}
        For the wideband case, we look at two possibilities: a) flat fading, and b) frequency-selective fading with OFDM. We treat both in order to facilitate a general understanding of the possibilities and different scenarios available. However, we only study the frequency-selective fading with OFDM case in our results. The presentation looks at a UPA deployed over the $z,~x$ directions. However, similar descriptions can be found for ULA and for different directions of deploying the array.

        Generalizing the previously presented scenario to wideband channels under slow fading and looking at it from $Q$ distinct frequencies adds one more dimension to the problem, i.e., one can project the frequencies to a discrete number of time components, $b$ \cite{Chen}.
The resulting channels are decomposed as a triple Kronecker product, i.e., a tensor type of product. The channel for user $k$ of group $g$ will comprise the sum of $P$ paths, each of a different delay, $\tau_l$.
Assume there are $b$ frequency slots available, starting at $0~\mathrm{Hz}$ and increasing up to $(Q-1)\Delta f$, with $\Delta f$ being the frequency bin bandwidth. We can then write the following equation for the wideband model virtual angular channel of an $x,~y$ UPA scenario
            \begin{equation}
            \begin{split}
           {\tilde {\mathbf h}}_{u,g,k,n} =
           &\frac{1}{{\sqrt L}}\sum_{l=1}^L \beta_{lgkn}
            \left({\mathbf F}_{N_u,y}\otimes {\mathbf F}_{N_u,x}\right)^h \\ &\times \left({\mathbf a}_y(\theta_{lgkn},\phi_{lgkn})\otimes{\mathbf a}_x(\theta_{lgkn},\phi_{lgkn}) \right)\delta(\tau-\tau_l). \label{eq_wide1}
           \end{split}
            \end{equation}
 Upon taking the Fourier transform with respect to $\tau$ in (\ref{eq_wide1}) we get
 \begin{equation}
\begin{split}
{\tilde {\mathbf h}}_{u,g,k,n}^{(f)}
=&\frac{1}{{\sqrt L}}\sum_{l=1}^L \beta_{lgkn}
            \left({\mathbf F}_{N_u,y}\otimes {\mathbf F}_{N_u,x}\right)^h   \\
           &\times \left({\mathbf a}_y(\theta_{lgkn},\phi_{lgkn})\otimes{\mathbf a}_x(\theta_{lgkn},\phi_{lgkn}) \right)
             {\mathbf w}_{f},
\end{split}
            \end{equation}
            where ${\mathbf w}_{f} = [ \exp(-j2\pi f \tau_1), \exp(-j2\pi f \tau_2), \cdots, \exp(-j2\pi f \tau_L)]^t$.
 Localizing the spectrum of the channel on the $Q$ frequency bins, starting at $0$ and with each bin having width $\Delta f$, we get a frequency angular domain matrix for ${\tilde {\mathbf h}}_{u,g,k,n}$ as follows
\begin{equation}
\begin{split}
{\tilde {\mathbf h}}_{u,g,k,n}^{(f)}
=&\frac{1}{{\sqrt L}}\sum_{l=1}^L \beta_{lgkn}
            \left({\mathbf F}_{N_u,z}\otimes {\mathbf F}_{N_u,x}\right)^h   \\
           &\times \left({\mathbf a}_y(\theta_{lgkn},\phi_{lgkn})\otimes{\mathbf a}_x(\theta_{lgkn},\phi_{lgkn}) \right)
             {\mathbf w}_{f,l},
\end{split}
            \end{equation}
            where
\begin{equation}
\begin{split}
 {\mathbf W}_{f} =\left[ \begin{array}{ c c c c} {1} & \exp(-j2\pi \Delta f \tau_1)  &\cdots & \exp(-j2\pi (Q-1)\Delta f \tau_1) \\
 {1} & \exp(-j2\pi \Delta f \tau_2)  &\cdots & \exp(-j2\pi (Q-1)\Delta f \tau_2) \\
 \vdots & \vdots      & \ddots & \vdots\\
{1} & \exp(-j2\pi \Delta f \tau_L) & \cdots & \exp(-j2\pi (Q-1)\Delta f \tau_L) \end{array} \right]
\end{split}
            \end{equation}
           is an $L\times Q$ Fourier transform matrix, and ${\mathbf w}_{f,l}$ is its $l$th ($1\leq l \leq L$) row.
Projecting the rows of matrix  ${\mathbf W}_{f}$ to the virtual time domain, by employing the DFT matrix of order $Q$, ${\mathbf F}_{Q}$, results in a virtual time decomposition, given as ${\mathbf w}_{f,l}{\mathbf F}_{Q}^*$, where ${\mathbf w}_{f,l}$ is the $l$th row of ${\mathbf W}_{f}$. We can then write for the entire channel representation in the virtual domain, both angular and time, the following equation
\begin{equation}
\begin{split}
{ \tilde{\mathbf H}}_{u,g,k,n}^{(t)}
=&\frac{1}{{\sqrt L}}\sum_{l=1}^L \beta_{lgkn}
            \left({\mathbf F}_{N_u,y}\otimes {\mathbf F}_{N_u,x}\right)^h \\    &\times \left({\mathbf a}_y(\theta_{lgkn},\phi_{lgkn})\otimes{\mathbf a}_x(\theta_{lgkn},\phi_{lgkn}) \right)
 {\mathbf w}_{f,l}{\mathbf F}_{Q}^* .
\end{split}
            \end{equation}
Transforming this virtual channel model matrix to a vector and using properties of Kronecker product \cite{diff} gives for the overall virtual channel model user matrix,
\begin{equation}
\begin{split}
{\tilde {\mathbf h}}_{u,g,k,n}^{(t)}
=&\frac{1}{{\sqrt L}}\sum_{l=1}^L \beta_{lgkn}
            \left({\mathbf F}_{Q}\otimes {\mathbf F}_{N_u,y}\otimes {\mathbf F}_{N_u,x}\right)^h   {\mathbf w}_{f,l}^t \otimes{\mathbf a}_y(\theta_{lgkn},\phi_{lgkn})\\
            &\otimes  {\mathbf a}_x(\theta_{lgkn},\phi_{lgkn})\\
           &= \frac{1}{{\sqrt L}}\left({\mathbf F}_{Q}\otimes {\mathbf F}_{N_u,y}\otimes {\mathbf F}_{N_u,x}\right)^h \sum_{l=1}^L \beta_{lgkn} {\mathbf w}_{f,l}^t \otimes{\mathbf a}_y(\theta_{lgkn},\phi_{lgkn})\\
            &\otimes  {\mathbf a}_x(\theta_{lgkn},\phi_{lgkn}).
\end{split}
            \end{equation}
We see that although an interesting development has occurred in this case, its applicability in the downlink precoding scenario is limited. We will see below that this changes dramatically as one moves to the frequency-selective case, due to the potential to employ OFDM on top of the JSDM model and achieve a number of additional benefits, e.g., increased capacity, easier group processing, and easier overall system deployment.

For the frequency-selective case, e.g., OFDM, the corresponding results need to be developed. Using the Tap Delay Line (TDL) model of an FS channel \cite{Proakis}, we can write for the subcarrier domain uplink channel response of the UPA\footnote{Similar results are derived for any UPA or ULA configuration within the context of this paper.} in time domain as
  \begin{equation}
\begin{split}
{ {\mathbf h}}_{u,g,k,n}^{(f)}
=&\frac{1}{{\sqrt L}}\sum_{l=1}^L \beta_{lgkn}
             {\mathbf a}_y(\theta_{lgkn},\phi_{lgkn})\\
             &\otimes{\mathbf a}_x(\theta_{lgkn},\phi_{lgkn})\delta(\tau-\frac{l-1}{B}) ,
\end{split}
            \end{equation}
   where $\delta(\cdot)$ represents the Dirac delta function, $B$ is the system bandwidth, with $B\gg B_{COH}$ and where $B_{COH}$ is the coherence bandwidth of the channel. We can then write for the frequency response of the channel
\begin{equation}
\begin{split}
{\tilde {\mathbf H}}_{u,g,k,n}^{(f)}
=&\frac{1}{{\sqrt L}}\sum_{l=1}^L \beta_{lgkn}
            \left( {\mathbf F}_{N_u,y}\otimes {\mathbf F}_{N_u,x}\right)^h   {\mathbf a}_y(\theta_{lgkn},\phi_{lgkn})\\
            &\otimes  {\mathbf a}_x(\theta_{lgkn},\phi_{lgkn}){\mathbf f}_{L,Q,l}\\
            &=\left( {\mathbf F}_{N_u,y}\otimes {\mathbf F}_{N_u,x}\right)^h
            {\mathbf M}_h  {\mathbf F}_{L,Q} , \label{eq_ch_OFDM}
\end{split}
            \end{equation}
            where ${\mathbf F}_{L,Q}$ is the last $Q-L$ row-truncated DFT matrix of order $Q$, i.e., a matrix of size $L\times Q$, ${\mathbf f}_{L,Q,l}$ is its $l$th column, $ {\mathbf M}_h$ is an $N_{u,x}N_{u,y}\times L$ matrix equal to $[{\mathbf a}_y(\theta_{lgkn},\phi_{lgkn})
            \otimes  {\mathbf a}_x(\theta_{1gkn},\phi_{1gkn}) \cdots {\mathbf a}_y(\theta_{Lgkn},\phi_{Lgkn})\otimes  {\mathbf a}_x(\theta_{Lgkn},\phi_{Lgkn})]\mathrm {diag}[\beta_{1gkn} \cdots
            \beta_{Lgkn}]$, where $\mathrm {diag}[\cdot]$ is the diagonal matrix of the vector in the brackets.
           Thus,
            ${\tilde {\mathbf H}}_{u,g,k,n}^{(f)}$ is of size $N_{u,x}N_{u,y}\times Q$, with only a few non-zero entries on each column, all of them on the same row numbers.
    The $q$th column of ${\tilde {\mathbf H}}_{u,g,k,n}^{(f)}$ is the uplink channel impulse response denoted as ${\mathbf h}_{u,g,k,n}^{(q)}$.
By recalling the fact that the spatial channel is sparse when projected to the virtual angles, exploiting the virtual channel domain representation, and after using the channel reciprocity between uplink and downlink due to TDD, for each subcarrier, we can rewrite the downlink channel of user's $k$, antenna $n$, subcarrier $q$, and group $g$ as ${(\mathbf h}_{u,g,k,n,v}^{(q)})^h$.
     We can then write for the downlink channel over all subcarriers, ${\mathbf H}_{d,g,k,n}^{(f)},$
     \begin{equation}
\begin{split}
 {\mathbf H}_{d,g,k,n}^{(f)} =  \left[ \begin{array}{c} {(\mathbf h}_{u,g,k,n,v}^{(0)})^h \\
{(\mathbf h}_{u,g,k,n,v}^{(1)})^h \\
 \vdots\\
{(\mathbf h}_{u,g,k,n,v}^{(Q-1)})^h \end{array} \right]
 \left[ \begin{array}{c c c c} \left( {\mathbf F}_{N_u,y}\otimes {\mathbf F}_{N_u,x}\right)_{{\cal S}_g}^h  \end{array} \right],
\end{split}
\end{equation}
then by stacking together all antennas for user $k$, we get
\begin{equation}
\begin{split}
 {\mathbf H}_{d,g,k}^{(f)} =  \left[ \begin{array}{c} ({\mathbf H}_{u,g,k,v}^{(0)})^h \\
({\mathbf H}_{u,g,k,v}^{(1)})^h \\
 \vdots\\
({\mathbf H}_{u,g,k,v}^{(Q-1)})^h \end{array} \right]
 \left[ \begin{array}{c c c c} \left( {\mathbf F}_{N_u,y}\otimes {\mathbf F}_{N_u,x}\right)_{{\cal S}_g}^h  \end{array} \right],
\end{split}
\end{equation}
where ${\mathbf H}_{u,g,k,v}^{(q)} = \left[{\mathbf H}_{u,g,k,1,v}^{(q)}\cdots {\mathbf H}_{u,g,k,N_{d,k^{(g)}},v}^{(q)} \right],$ a size $|{\cal S}_g|\times N_{d,k^{(g)}}$ matrix.
 Each group, $g$ ($1\leq g \leq G$) can be considered independently due to JSDM, as explained above.
We can then employ different subcarriers for different users within a group or between different groups which is explained in more detail next.

\subsection{Combined Frequency and Spatial Division and Multiplexing (CFSDM)}
In certain scenarios, it is envisaged that there is partial overlapping between adjacent groups which can lead to significant reduction in system capacity as multiple common VCMBs need to be switched off to avoid cross-group interference. Furthermore, the user co-ordination-related issues within each group might make JSDM difficult to deploy, in general. One very promising solution to mitigate both of these problems, without sacrificing the overall system capacity, is proposed herein by virtue of a novel combination of the concept of CFSDM. This idea is described below.

In CFSDM, group support sets with common VCMBs are assigned different OFDM subcarriers. In addition, users with multiple antennas within each group are also assigned different OFDM subcarriers. Finally, for users with a single antenna on the downlink, offering multiple subcarriers is the only possibility toward higher data rates. The novelty of combining JSDM  based on the VCM decomposition as proposed here and OFDM lies over the fact that it helps mitigate interference issues associated with common inter-group VCMBs as well as intra-group co-ordination. Due to the orthogonality among the subcarriers in OFDM, it is possible for two groups with common VCMBs to receive data on two different subcarriers without interference, while utilizing all the VCMBs available to them. In a similar fashion, for users within a group, assigning different subcarriers to each user makes it feasible that each user receives its data on a separate subcarrier utilizing its own receiving antennas only, thus obliterating the requirement for user co-ordination at the receiver. Specifically, let's look at a system with FS and OFDM as described in the previous subsection. Assume the system groups are as in Section I and that the OFDM component contains $Q$ orthogonal subcarriers, for some ``high enough'' number, $Q$ (e.g., $Q\geq 64$). First, let's assume that there is overlapping of the VCMBs between groups $g$ and $g'$, i.e., ${\cal S}_g\cap {\cal S}_{g'} \neq \emptyset$. The system then assigns these groups to different subcarrier groups, say ${\cal S}_{g,q}$, ~${\cal S}_{g',q'}$, which will be defined explicitly after the user subcarriers are assigned. Since there are $K_{(g)}$ users in group $g$, there is a need to assign $K_g$ subcarriers for group $g$ and $K_{g'}$ for group $g'$, if no coordination exists between users in the groups. In order for the two groups to employ all spatial capability available to them, the two groups need to avoid interference over the common VCMBs, thus in total the two groups need $K_g + K_{g'}$ different subcarriers assigned to them. Within each group, say for group $g$, user $k^{(g)}$ employing subcarrier $q_{g,k}$, there will be a PGP precoder employed in the subcarrier domain pertaining to the following receiver model
\begin{equation}
\begin{split}
 {\mathbf y}_{d,k^{(g)}}^{({q})} &=  \left[ \begin{array}{c} ({\mathbf H}_{u,k^{(g)},v}^{({q})})^h
 \end{array} \right]
 \left[ \begin{array}{c c c c} \left( {\mathbf F}_{N_u,y}\otimes {\mathbf F}_{N_u,x}\right)_{{\cal S}_g}^h  \end{array} \right]\\
 &\times \left( {\mathbf F}_{N_u,y}\otimes {\mathbf F}_{N_u,x}\right)_{{\cal S}_g} {\mathbf P}_{g,k^{(g)}}^{({q})} {\mathbf c}_{g,k}^{({q})} + {\mathbf n}_{g,k^{(g)}}^{({q})}\\
 &=\left[ \begin{array}{c} ({\mathbf H}_{u,g,k,v}^{(q)})^h
 \end{array} \right]
  {\mathbf P}_{g,k^{(g)}}^{(q)} {\mathbf c}_{g,k}^{(q)} + {\mathbf n}_{g,k}^{(q)}. \label{eq_OFDM_1}
\end{split}
\end{equation}
Now, precoding is performed on a per user and subcarrier basis, without the need for user co-operation within the group. This CFSDM approach allows for more flexible data rate allocations on a per user basis as well as helps in overcoming issues associated with spatial overlapping between groups. The following lemma also helps simplify the precoder design when the number of group antennas $N_{d,g}$ is smaller than the number of available spatial dimensions $|{\cal S}_g|$.
\begin{lemma}
When all users in a group have the same number of antennas and with $L\ll \sqrt{Q}$ and subcarrier pairs $q,~q'$ assigned within a group satisfying $|q-q'|\ll \sqrt{Q}$, then all subcarrier virtual downlink channel matrices, i.e., for all $q=1,2,\cdots, Q$, $({\mathbf H}_{u,g,k,v}^{(q)})^h$ have the same singular values. Thus, the optimal precoder over all subcarriers is the same.
\end{lemma}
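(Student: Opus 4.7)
The plan is to isolate the subcarrier dependence of $({\mathbf H}_{u,g,k,v}^{(q)})^h$ into a single Fourier column, and then use the two smallness hypotheses to show that this column varies by a phase close to unity across the subcarriers assigned within the group. Vanishingly perturbed matrices have the same singular values, hence the capacity-optimal precoder is shared across subcarriers by Theorem~2.

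First I would unpack (\ref{eq_ch_OFDM}) column by column. For antenna $n$ of user $k^{(g)}$, the $q$-th subcarrier virtual channel vector reads
\begin{equation}
{\mathbf h}_{u,g,k,n,v}^{(q)} = \boldsymbol{\Phi}_{g,n}\, {\mathbf f}_{L,Q,q},
\end{equation}
where $\boldsymbol{\Phi}_{g,n} = {\mathbf S}_g^t({\mathbf F}_{N_u,y}\otimes {\mathbf F}_{N_u,x})^h {\mathbf M}_h^{(n)}$ depends on the antenna geometry, the support set, and the path gains but not on $q$, and ${\mathbf f}_{L,Q,q}$ is the $q$-th column of ${\mathbf F}_{L,Q}$. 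Stacking the $N_{d,k^{(g)}}$ antenna columns produces the $|\mathcal{S}_g|\times N_{d,k^{(g)}}$ matrix ${\mathbf H}_{u,g,k,v}^{(q)}$ whose subcarrier dependence is entirely carried by the single vector ${\mathbf f}_{L,Q,q}$. Because every user in the group shares the antenna count, this factorization is uniform across users.

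Next I would compare two subcarriers directly. By construction
\begin{equation}
{\mathbf f}_{L,Q,q'} = {\mathbf D}_{q',q}\, {\mathbf f}_{L,Q,q}, \qquad [{\mathbf D}_{q',q}]_{l,l}=\exp\!\left(-j\frac{2\pi(q'-q)(l-1)}{Q}\right),
\end{equation}
a diagonal unitary whose $l$-th phase has modulus $2\pi|q'-q|(l-1)/Q$. Under $L\ll\sqrt{Q}$ and $|q'-q|\ll\sqrt{Q}$ this quantity is uniformly $o(1)$ for $l\leq L$, so ${\mathbf D}_{q',q}\to{\mathbf I}_L$ and ${\mathbf H}_{u,g,k,v}^{(q')}\to{\mathbf H}_{u,g,k,v}^{(q)}$ column-wise. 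The limiting Gram products $({\mathbf H}_{u,g,k,v}^{(q)})^h{\mathbf H}_{u,g,k,v}^{(q)}$ then agree, so both singular values and right singular vectors coincide; invoking Theorem~2 and the SVD alignment argument of \cite{Xiao}, the capacity-optimal per-user precoder is identical on every subcarrier assigned inside the group. Transitivity of the pairwise equivalence then extends the conclusion to any subset of subcarriers meeting the pairwise gap condition.

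The main obstacle I expect is the quantitative passage from ``phase uniformly close to unity'' to ``exactly the same singular values'' rather than merely ``close singular values.'' Concretely, the argument pushes an $o(1)$ column perturbation through the SVD, so the lemma's ``same'' must be read asymptotically in $Q$, consistent with the paper's use of ``$\ll$''. A minor secondary issue is the Kronecker/DFT bookkeeping needed to extract the clean $\boldsymbol{\Phi}_{g,n}{\mathbf f}_{L,Q,q}$ factorization; once the antenna index and its path gains are absorbed into ${\mathbf M}_h^{(n)}$, the identities already invoked around (\ref{eq_UPA_concl}) deliver the required structure.
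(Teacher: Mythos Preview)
Your proposal is correct and follows essentially the same route as the paper. The paper compresses your per-antenna factorization $\boldsymbol{\Phi}_{g,n}{\mathbf f}_{L,Q,q}$ into the single Kronecker form $\bigl({\mathbf I}_{N_{k^{(g)}}}\otimes {\mathbf f}_{q,L}^h\bigr)({\mathbf M}_{u,g,k,v})^h$, then makes the identical observation that $\exp\!\bigl(j2\pi(q-q')l/Q\bigr)\approx 1$ for all $1\leq l\leq L$ under the two smallness hypotheses, concluding that the matrices (and hence their singular values) are approximately equal; your explicit diagonal ${\mathbf D}_{q',q}$ and your caveat that ``same'' must be read asymptotically are in fact slightly more careful than the paper's own presentation.
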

\begin{proof}
We can easily rewrite (\ref{eq_OFDM_1}) by employing Kronecker matrix products as
\begin{equation}
\begin{split}
 {\mathbf y}_{d,k^{(g)}}^{({q})}
 &=\left[ \begin{array}{c} \left({\mathbf I}_{N_{k^{(g)}}}\otimes {\mathbf f_{q,L}}^h \right)({\mathbf M}_{u,g,k,v})^h
 \end{array} \right]
  {\mathbf P}_{g,k^{(g)}}^{(q)} {\mathbf c}_{g,k} + {\mathbf n}_{g,k}^{(q)},
\end{split}
\end{equation}
where $({\mathbf M}_{u,g,k,v})$ is a $N_{k^{(g)}}L\times |{\cal S}_g|$ virtual channel matrix derived from (\ref{eq_ch_OFDM}) and ${\mathbf f_{q,L}}$ represents the $q$th column of the matrix ${\mathbf F}_{L,Q}$. Now, based on the assumptions of the lemma, for any different subcarriers assigned to the group and for all $1\leq l \leq L $, we have $\exp(j2\pi\frac{(q-q')l}{Q})\approx 1$, from which we see that the matrices are approximately $\left({\mathbf I}_{N_{k^{(g)}}}\otimes {\mathbf f_{q,L}}^h \right)({\mathbf M}_{u,g,k,v})^h$ equal for all users in the group, thus they possess approximately equal singular values.
\end{proof}
Note that for a massive MIMO system a large number of $Q$ will be needed. In addition in the millimeter wavelength channels envisaged for 5th Generation (5G) cellular wireless systems, the assumption of $L\ll \sqrt{Q}$ will also be valid, since $L$ is small \cite{MMWmodel}. Thus, by assigning contiguous frequency subcarriers to different users within groups we can achieve the conditions of the above lemma.
Based on the premise of this lemma, the optimal downlink precoder in the group is the same, independently of the subcarrier employed. This is due to the fact that for CSIT optimal precoding, the optimal precoder only depends on the singular values of the channel matrix \cite{Xiao, TE_TWC}. Thus, if many subcarriers are deployed to offer higher data rates, the precoding complexity stays the same.

It is important to stress that with CFSDM, within each group, all users share the same spatial subspace, e.g., based on the same VCMBs per group. In addition, the group users share the same time domain. However, users' signals within the group are separated based on OFDM's frequency domain orthogonality. Futhermore, based on the same principle, users between overlapping VCMBs, although they share some of the spatial subspace, are orthogonal in the frequency domain, thus they do not interfere. Finally, the requirement for different subcarriers between groups is only imposed if the two groups share many VCMBs. If only a few VCMBs are shared, an alternative possibility is to switch off those common VCMBs, thus obliterating the intergroup interference. However, one can still see advantages of employing CFSDM.

\section{Numerical Results}
In this section, we present our numerical results based on ULA and UPA Massive MIMO systems with $N_u=100$ antennas at the base station. The systems employ QAM with size $M=16,~64$. We present results for both systems with and without OFDM. We have used an $L=3$ Gauss-Hermite approximation \cite{TK_EA_QAM} which results in $3^{2N_r}$ total nodes in the Gauss-Hermite approximation due to MIMO in order to facilitate results with optimal precoding in conjunction with QAM modulation. The implementation of the globally optimizing methodology is performed by employing two
backtracking line searches, one for ${\mathbf W}$ and another one for ${\boldsymbol \Sigma}_G ^2$ at each iteration, in a fashion similar to \cite{SCSI}. For the results presented, it is worth mentioning that only a few iterations (e.g., typically $<8$) are required to converge to the optimal solution results as presented in this paper. We apply the complexity reducing method of PGP \cite{TE_TWC} which offers semi-optimal results under exponentially lower transmitter and receiver complexity \cite{TE_TWC}. PGP divides the transmitting and receiving antennas into independent groups, thus achieving a much simpler detector structure while the precoder search is also dramatically reduced as well.
We divide this section in three parts, the first part looks at the VCM sparse channel representations for ULA and UPA systems, the second one examines the performance of linear precoding for Massive MIMO without OFDM, while the third one studies systems with OFDM. We use $N_{t,v}, N_{r,v}$ to denote the number of data symbol inputs, and the number of of antenna outputs, respectively, in the virtual domain. By employing PGP, one can trade in higher values of $N_{t,v}, ~N_{r,v}$ for higher overall throughput, albeit at a slightly increased complexity at the transmitter and receiver, as explained in detail in some of the examples below. Alternatively, one can employ a smaller number of $N_{t,v},~N_{r,v}$, in order to achieve higher throughput, but at significantly lower complexity. In all cases, it is stressed that the actual number of transmission and reception antennas stays the same, while all physical antennas are employed always. The details of these techniques are omitted here due to space limitation.

It is worthwhile mentioning that for precoding methods with finite inputs, two types of channels are regularly present in the literature \cite{Xiao, SCSI,Xiao2,Lozano,TK_EA_QAM,TE_TWC}: a) Type-I channels in which the precoder offers gain in the lower $\mathrm {SNR}$ regime, and b) Type II channels in which the precoder offers gain in the high $\mathrm {SNR}$ regime. Our results herein fully corroborate this type of behavior in all cases considered.
\subsection{VCM Channel Sparsity for ULA and UPA Scenarios}
First, we present results for the sparse behavior of the VCM representation in the ULA case. We randomly create 5 groups of channels as per the ULA model presented. The base ULA is deployed along the $z$ direction with $N_u=100$ elements spaced at a normalized distance $D=0.5$. There are $L=5$ paths in each channel (a smaller number of $L$ results in sparser representations). The elevation angles for groups $G_1,~G_2,~G_3,~G_4,~G_5$ are at $5^\circ,~33^\circ,~61^\circ,~89^\circ,~\text{and}~117^\circ$, respectively. In addition, the groups possess $16,~2,~4,~4,~\text{and}~6$ antennas, respectively. The angular spread for all groups is taken to be $\pm~4^\circ$ around the elevation angle of each group. The channels are projected to the VCM space, then only components greater than 1 in absolute square power are selected. In all cases considered, this selection process results in more than $94\%$ of the total power of each channel selected. The corresponding, non-overlapping support sets are as follows (the numbers of each set correspond to the numbered components of the VCM representation vector, i.e., the significant VCMBs):\newline
${\cal S}_1 = [ 56 ,~   57,~    58,~    59,~    60,~    61,~    62,~    63,~    64,~    65 ]$,\\ ${\cal S}_2 = [ 38,~    39,~    40,~    41,~    42,~    43,~    44]$,\\
${\cal S}_3 = [  27,~    28,~    29,~    30,~    31,~    32,~    33,~    34]$,\\ ${\cal S}_4 = [   1,~     2,~     3,~     4,~    5,~     6,~     7,~    99,~   100]$,\\ ${\cal S}_5 = [   70,~    71,~    72,~    73,~    74,~    75,~    76,~    77,~    78,~    79]$.\newline We observe that a ULA allows for easy sparse non-overlapping support sets for multiple groups.

Next we present similar results for a UPA array along the $x,~y$ directions. In this example, there are 8 groups, $G_1$ through $G_8$, formed. The normalized distance between successive elements in both directions is $D=0.6$, while the number of elements on each direction is equal to 10, i.e., $N_{u,x}=N_{u,y}=10$. There are a total of $16,~1,~4,~4,~6,~6,~\text{and}~1$ antennas available for each group. The angle spread per dimension is $\pm 2^\circ$, while $L=2$. The corresponding VCMBs per group are as follows:\newline ${\cal S}_1 =[ 1,~ 2,~ 3,~    4,~     5,~     6,~     7,~     8,~     9,~    10,~    11,~    12,~    20,~    21,~    31,~    41,~    51,~    61,~    71,~    81,~    91]$, \\${\cal S}_2 =[ 12,~    13,~    14,~    15]$, \\${\cal S}_3 =[  2,~     3,~     4,~     5,~     6,~    11,~    12,~    13,~    14,~    15,~    16,~    17,~    23,~    24,~    34,~    44,~    54,~    64,~    74,~    84,~    93,~    94,~
]$, \\${\cal S}_4 =[   3,~     4,~    5,~     6,~    14,~    15,~    24,~    34,~    64,~    74,~    75,~    84,~    85,~    91,~    92,~    93,~    94,~    95,~    96,~    97,~    98,~    99,~   100]$, \\${\cal S}_5 =[ 74,~    83,~    84,~    85,~    94]$, \\${\cal S}_6 = [73,~    83 ]$,\\ ${\cal S}_7 =[ 1,~     2,~    11,~    21,~    61,~    71,~    81,~    82,~    91,~    92,~    93,~    94,~    95,~   96,~    97,~    98,~    99,~   100]$, \\${\cal S}_8 =[1,~     2,~     3,~     4,~     8,~     9,~    10,~    11,~    20,~    21,~    31,~    41,~    51,~    61,~    71,~    81,~    91,~    92,~    99,~   100 ]$. \newline It is easy to see that UPA deployments offer more VCMBs per group, however at a cost to orthogonality. In addition, UPAs offer better resolution compared to ULAs, thus they could in principle offer higher capacity. An additional benefit of a UPA is the fact that one gets more VCMBs per group thus the resulting throughput with precoding is higher. Due to the significant overlapping between different group VCMBs, there are two options when UPAs are selected for higher capacity: a) Release common VCMBs, i.e., leave the common VCMBs between groups unused, however at the expense of performance, or, b) Employ OFDM in parallel to the JSDM in the system. The latter approach can offer very high capacity due to its capability to mitigate overlapping in spatial domain while at the same time it takes advantage of orthogonality between non-overlapping VCMBs. Both approaches are explained in more detail below.

\subsection{Precoding Results without OFDM}
As a first example, we present results for a ULA with 5 groups formed, shown as $G_1,~G_2,\cdots,G_5$, respectively. Groups $1,~2,~3,~4,~5$ occupy the following groups of non-overlapping, i.e., disjoint VCMBs \newline${\cal S}_1=[57,~58,~59,~60,~61,~62,~63,~64]$, \\${\cal S}_2=[39,~41,~42,43,~44,~45,~46,~47]$, \\${\cal S}_3=[25,~26,~27,~28,~29,~30,~31,~32,~33,34,~35]$, \\${\cal S}_4=[1,~2,~3,~4,~5,~6,~98,~99,100]$, and \\ ${\cal S}_5=[68,~69,~70,~71,~72,~73,~74,~75,~76,~77]$,\newline respectively. The groups include $4,~2,~4,~4,~6$ antennas at the User Equipment (UE), respectively. In the non-OFDM case, users within groups need to co-ordinate their downlink. Thus, the number of users within the group becomes irrelevant and only the number of antennas becomes essential.
In Fig. 3 we present results for $G_4$. We observe that high gains in throughput are available for low $\mathrm {SNR}$, i.e., a Type-I channel behavior. For example, at $\mathsf{SNR}_b=-7~dB$ there is an 33\% throughput increase by using PGP over the no precoding case. In addition, there is an precoding gain of $4-5~dB$ over the low $\mathrm {SNR}$ regime. As far as complexity is concerned, based on the analysis of \cite{TK_EA_QAM}, the PGP precoding example presented with $N_{t,v}=6$ require a complexity (both at the transmitter and receiver) on the order of $3M^4$, while the no precoding example requires a complexity at the receiver on the order of $M^{18}$, thus PGP needs $(1/3)M^{14}$ less complexity. For the $N_{t,v}=8$ case the complexity reduction with PGP over the no PGP case becomes $(1/4)M^{14}$. Thus, we see that PGP helps keep the UE complexity low, while it gives significant gains in throughput and $\mathrm {SNR}$.
\begin{figure}[h]
\centering
\setcounter{figure}{2}
\includegraphics[height=2.4in,width=3.5in]{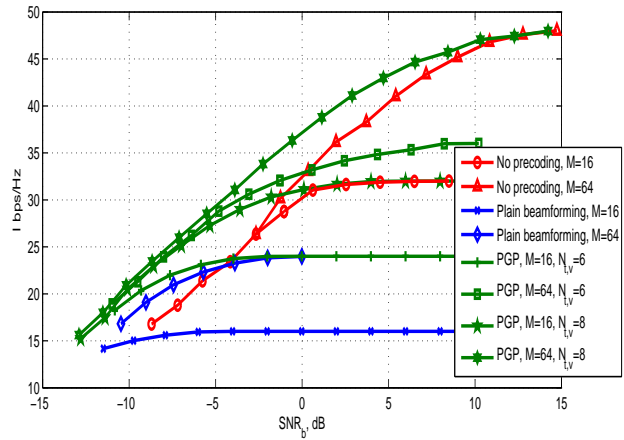}
	\caption{$I({\mathbf x};{\mathbf y})$ results for PGP, plain beamforming, and no-precoding cases for the channel in $G_4$ in conjunction with QAM $M=16,~64$ modulation.}
\end{figure}
In Fig. 4 we present results for $G_5$. Here, we observe high gains in throughput in  high $\mathrm {SNR}$ regime. Here we employ $N_{t,v}=6$. We observe that this is a Type-II channel behavior. At $\mathsf{SNR}_b > 0$, the no precoding case throughput saturates at $40~bps/Hz$. However, with PGP we get significantly higher throughput, e.g., at $\mathsf{SNR}_b = 10~dB$ the throughput is $48~bps/Hz$. Further, it takes PGP $(1/6)M^{16}$ less UE complexity  than the no precoding one in order to achieve this additional throughput at the UE.
\begin{figure}[h]
\centering
\setcounter{figure}{3}
\includegraphics[height=2.4in,width=3.5in]{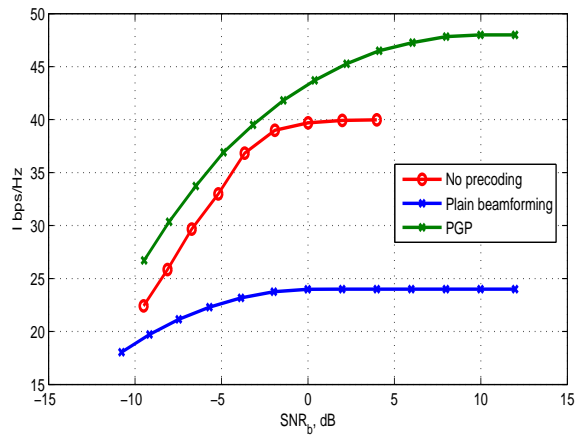}
	\caption{$I({\mathbf x};{\mathbf y})$ results for PGP, plain beamforming, and no-precoding cases for the channel in $G_5$ in conjunction with QAM $M=16$ modulation.}
\end{figure}

For a UPA along the $z,~x$ directions, with $N_{u,z}=N_{u,x}=10$, $D=0.6$, we get 8 groups with the following VCMBs:   \newline${\cal S}_1=[1,     2,     3,    10,    11,    12,    21,    31,    41,    71,    81,    91],$ \newline
${\cal S}_2=[ 3,     4,    11,    12,    13,    14,    15,    16,    17,    20,    23,    33,    93]$, \newline
 ${\cal S}_3=[ 3,     4,    14,    94,
]$, \newline
 ${\cal S}_4=[ 4,    14,    24,    34,    44,    54,    64,    74,    83,    84,    85,    93 ,   94,    95]$,\newline
  ${\cal S}_5=[53,    63,    72,    73,    74 ,   83,    93,]$,\newline
  ${\cal S}_6=[ 62,    72]$, \newline
   ${\cal S}_7=[ 1 ,   11,    21 ,   61,    71,    81,    91,    92 ,   93,    99,   100,
]$,\newline
and ${\cal S}_8=[1,     2,     3,     4 ,    5,     6,     7 ,    8 ,    9,    10,    11,    21,    81,    91 ,  100
]$. \newline The corresponding number of each group UE antennas is $4,~2,~4,~4,~6,~1,~6,~\text{and}~8$, respectively. We see that partial overlapping exists between different groups VCMBs. Without OFDM, we need to leave the common VCMBs unused to avoid primary interference between groups. We thus end up with the following revised sets:\newline  ${\cal S}_1=[31,~41]$, \\${\cal S}_2 =[13,~15,~16,~17]$, \\${\cal S}_3 =[94]$, \\${\cal S}_4 =[64,~84,~85,~95]$, \\${\cal S}_5=[53,~63,~73]$, \\${\cal S}_6 =[62]$, \\${\cal S}_7 =[61,~92,~93,~99]$, and \\${\cal S}_8=\emptyset$.\newline In Fig. 5 we present results on the $G_1$ downlink precoding where we have applied PGP with two additional ``ficticious'' inputs, similar to \cite{TK_EA_QAM} and see dramatic improvements on downlink throughput. We see the dramatic impact of VCMB overlapping in the case of UPA. Notice that the complexity involved in the PGP is two times higher than the one on the no precoding case, due to $N_{t,v}=4$ ``ficticious'' antennas being introduced, while the incurred loss in $G_1$ due to the reduction on the number of useful VCMBs is highly mitigated. This example is a Type-II channel behavior in which PGP achieves double the throughput in high $\mathrm {SNR}$, while the corresponding UE complexity is two times higher than the no precoding one, since $N_{t,v}=4> N_t$.
\begin{figure}[h]
\centering
\setcounter{figure}{4}
\includegraphics[height=2.4in,width=3.5in]{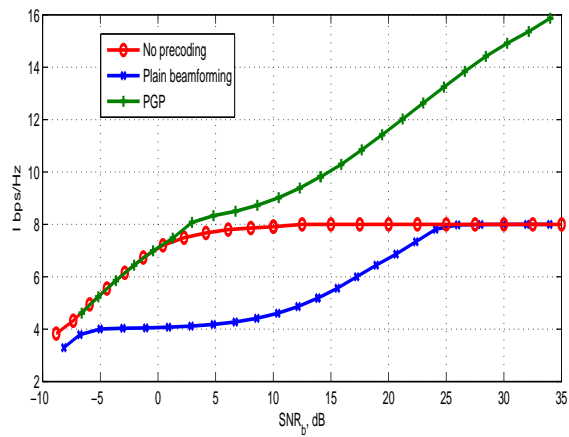}
	\caption{$I({\mathbf x};{\mathbf y})$ results for PGP, plain beamforming, and no-precoding cases for the channel in $G_1$ in conjunction with QAM $M=16$ modulation.}
\end{figure}
For the same system, in $G_2$ we get the results presented in Fig. 6. For the PGP and plain beamforming cases we show results for both $M=16,~64$. The PGP and plain beamforming results use $N_{t,v}=N_t=4$ ``ficticious'' antennas each, the same number as the no precoding case. We observe $\mathrm {SNR}$ and thoughput gains in low $\mathrm {SNR}$. For example an $\mathrm {SNR}$ gain higher than 8 dB with PGP in the $\mathsf{SNR_b}$ over the no precoding case in low $\mathrm {SNR}$, while the incurred UE receiver complexity with PGP is $(1/2)M^4$ times lower than the no precoding case.
\begin{figure}[h]
\centering
\setcounter{figure}{5}
\includegraphics[height=2.4in,width=3.5in]{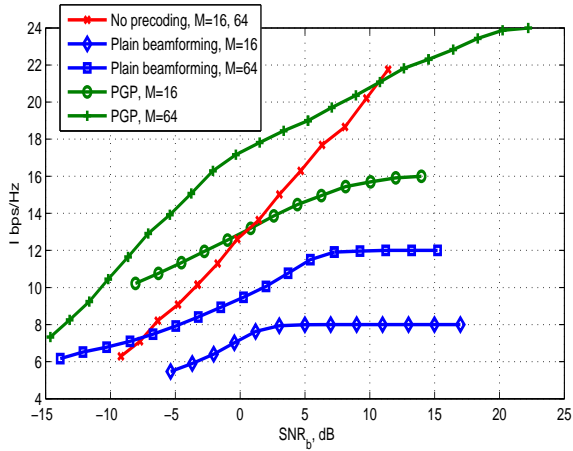}
	\caption{$I({\mathbf x};{\mathbf y})$ results for PGP, plain beamforming, and no-precoding cases for the channel in $G_2$ in conjunction with QAM $M=16,~64$ modulation.}
\end{figure}
\subsection{Precoding Results with OFDM}
We next present results with OFDM. For a UPA deployed over the $x,~y$ directions, with $N_{u,x}=N_{u,y}=10$, an OFDM system with $Q=64$ subcarriers, we get 3 groups with the following VCMB's. $G_1$ has \newline ${\cal S}_1 = [1, ~2,    ~10,    ~11,    ~21,   ~ 81,    ~91
 ]$,\newline $G_2$ has ${\cal S}_2 =[ 11,    ~12,    ~13,   ~ 14,   ~15,    ~16,   ~ 17, ~   18,~    19,~    20]$, \newline and $G_3$ has ${\cal S}_3 = [3,~     4,~     5,~    12,~    13,~    14,~    15,~    24,~    34,~    44,~    54,~    64,~    74,~    84,~    94]$. \newline $G_1$ comprises 2 users with two antennas each, $G_2$ and $G_3$ comprise 2 users with 4 antennas each. There is VCMB overlapping between the groups, however by employing CFSDM we can avoid the interference coming from overlapping VCMBs. In addition, by employing different subcarriers between the different users in each group in CFSDM, we can avoid joint decoding within the group level, i.e., the users decode their data totally independently. In this particular example we envisaged employing in total 6 OFDM subcarriers, 2 per group for all 3 groups. In Fig. 7 and Fig. 8 we present results for user 1, user 2 of $G_1$, respectively. In both cases we see Type-I channel behavior. In this example, both users employ 2 receiving antennas. By virtue of CFSDM, the downlink can employ all VCMBs for both users, i.e., no need to partition the VCMB set. The example here applies 4 downlink pre-beamformers per user and in the PGP results we use 2 groups of size $4\times 4$ each, by extending the receiving antennas to 4, using 2 ``ficticious'' antennas, i.e., $N_{t,v}=4$ in a fashion similar to \cite{TK_EA_QAM}. Futhermore, a revised, improved version of plain beamforming is used in which, only inputs with non-zero associated singular values are employed. We call this form of plain beamforming Singular Value Aware Plain Beamforming (SVAPB). We see very high throughput and $\mathrm {SNR}$ gains offered by PGP over the no precoding in low $\mathrm {SNR}$, and the plain beamforming case, over all shown $\mathrm {SNR}$, respectively, although the latter performs better than standard beamforming due to SVAPB. In Fig. 7 we show at $\mathsf{SNR}_b=5~dB$ more than 3 times better throughput with PGP than the no precoding case, while for a quite wide range of $\mathsf{SNR}_b$ we see gains on the order of $8~dB$ in $\mathrm {SNR}$. The corresponding complexity with PGP is $(1/2)M^{10}$ times lower than the no precoding one. In Fig. 8 we observe a gain in throughput of $33\%$ at $\mathsf{SNR}_b = 15~dB$, while the $\mathrm {SNR}$ gain is on the order of $5~dB$. The corresponding complexity with PGP is same with the one in Fig. 7, i.e.,  $(1/2)M^{10}$ lower than the no precoding one.
 \begin{figure}[h]
\centering
\setcounter{figure}{6}
\includegraphics[height=2.4in,width=3.5in]{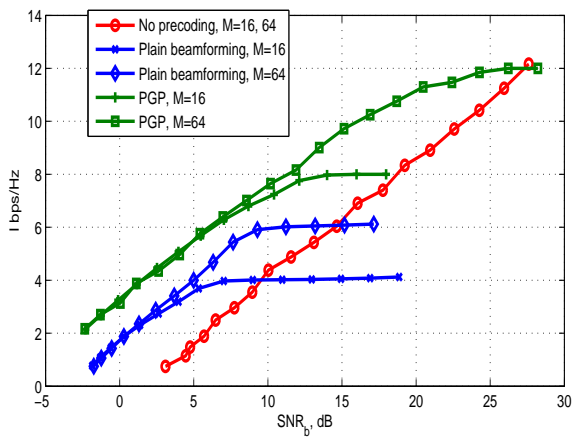}
	\caption{$I({\mathbf x};{\mathbf y})$ results for PGP, plain beamforming, and no-precoding cases for the channel in group $G_1$, user 1 in conjunction with QAM $M=16,~64$ modulation and CFSDM.}
\end{figure}

\begin{figure}[h]
\centering
\setcounter{figure}{7}
\includegraphics[height=2.4in,width=3.5in]{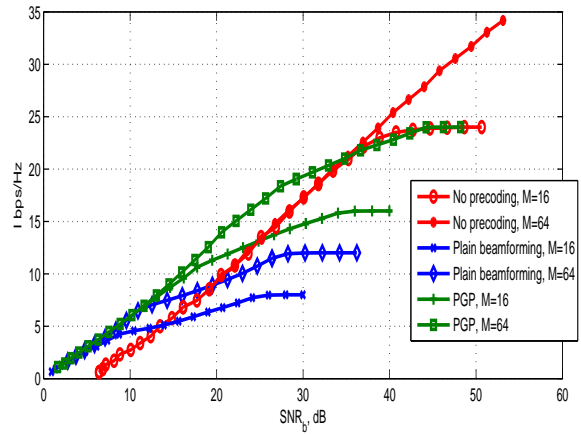}
	\caption{$I({\mathbf x};{\mathbf y})$ results for PGP, plain beamforming, and no-precoding cases for the channel in $G_1$, user 2 in conjunction with QAM $M=16,~64$ modulation and CSFDM.}
\end{figure}

\section{Conclusions}
In this paper, a novel methodology for Massive MIMO systems is presented, allowing for optimal downlink linear precoding with finite-alphabet inputs, e.g., QAM and multiple antennas per user. The methodology is based on a sparse VMC decomposition of the downlink channels, which then allows for orthogonality between different user groups, due to non-overlapping sets of VCMBs. The methodology is applied in systems with or without OFDM and for ULA and UPA antenna configurations. By employing the PGP technique to the proposed system, we show very high gains are available on the downlink. However, in the non-OFDM deployment, the users in each group need to co-ordinate their detection processes in order to achieve precoding gains. When OFDM is available, there is more flexibility in system design. For example, users in the group can be assigned different subcarriers, thus ameliorating the need for intra-group detection coordination. In addition, in cases of partial overlapping of the available VCMB sets, by employing separate subcarriers, as in OFDM, the interfering groups can become completely orthogonal, thus fully mitigating the inter-group interference due to partial VCMB overlap. The novel combination of OFDM with the VCM-based JSDM system presented is called Combined Frequency and Spatial Division and Multiplexing (CFSDM) and offers additional advantages, such as high throughput to users with single antenna and it also obliterates the need for intragroup user decoding coordination. Our numerical results show high gains, e.g., typically higher than $60\%$ and in some cases as high as $200\%$ in throughput while the incurred precoding complexity is exponentially lower at both the transmitter and receiver sites.
\bibliographystyle{IEEEtran}


\end{document}